\documentclass[11pt, a4paper, reqno, 11pt]{article}
\usepackage{a4wide}
\usepackage{amsmath}
\usepackage{amssymb}
\usepackage{amsthm} 
\usepackage{hyperref}
\hypersetup{colorlinks=true,citecolor=blue,linkcolor=blue,urlcolor=blue}
\usepackage{tikz}

\newtheorem{theorem}{Theorem}
\newtheorem{proposition}[theorem]{Proposition}
\theoremstyle{definition}
\newtheorem{definition}[theorem]{Definition}
\newtheorem{example}[theorem]{Example}
\newtheorem{remark}[theorem]{Remark}

\newcommand{\I}{I}
\renewcommand{\deg}{\operatorname{deg}}
\newcommand{\Par}{\textsf{Par}}
\newcommand{\CSP}{\textnormal{CSP}}
\newcommand{\CSPP}{\textnormal{CSP($P$)}}
\newcommand{\NAE}{\textsf{NAE}}
\newcommand{\Cut}{\textsf{Cut}}
\DeclareMathOperator{\Val}{Val}
\newcommand{\eps}{\varepsilon}
\newcommand{\RR}{\mathbb{R}_+}
\newcommand{\tuple}[1]{\ensuremath{(#1)}}

\newcommand\Crestrict[2]{{
  \left.\kern-\nulldelimiterspace 
  #1 
  \right|_{#2} 
  }}

\begin{document}

\title{Sparsification of Binary CSPs\thanks{An extended abstract of this work
appeared in \emph{Proceedings of the 36th International Symposium on
Theoretical Aspects of Computer Science (STACS 2019)}~\cite{bz19:stacs}. 
Stanislav \v{Z}ivn\'y was supported by a Royal Society University Research
Fellowship. Work mostly done while Silvia Butti was at the University of Oxford.
The project that gave rise to these results received the support of a fellowship
from ``a Caixa'' Foundation (ID 100010434). The fellowship code is
LCF/BQ/DI18/11660056. This project has received funding from the European
Union’s Horizon 2020 research and innovation programme grant agreement No 714530
and under the Marie Sk\l{}odowska-Curie grant agreement No 713673. The paper
reflects only the authors' views and not the views of the ERC or the European
Commission. The European Union is not liable for any use that may be made of the
information contained therein.}}

\author{
Silvia Butti\\
Department of Information and Communication Technologies\\
Universitat Pompeu Fabra, Spain\\
\texttt{silvia.butti@upf.edu}
\and
Stanislav \v{Z}ivn\'y\\
Department of Computer Science\\
University of Oxford, UK\\
\texttt{standa.zivny@cs.ox.ac.uk}
}

\date{}
\maketitle

\begin{abstract}

  A cut $\eps$-sparsifier of a weighted graph $G$ is a re-weighted subgraph of
  $G$ of  (quasi)linear size that preserves the size of all cuts up to a
  multiplicative factor of $\eps$. Since their introduction by Bencz\'ur and
  Karger [STOC'96], cut sparsifiers have proved extremely influential and found
  various applications. Going beyond cut sparsifiers, Filtser and Krauthgamer
  [SIDMA'17] gave a precise classification of which binary Boolean CSPs are
  sparsifiable. In this paper, we extend their result to binary CSPs on
  arbitrary finite domains.

\end{abstract}

\section{Introduction}
\label{sec:intro}

The pioneering work of Bencz\'ur and Karger~\cite{Benc} showed that every
edge-weighted undirected graph $G=(V,E,w)$ admits a cut-sparsifier. In
particular, assuming that the edge weights are positive, for every $0<\eps<1$
there exists (and in fact can be found efficiently) a re-weighted subgraph
$G_\eps=(V,E_\eps\subseteq E,w_\eps)$ of $G$ with $|E_\eps|=O(\eps^{-2}n\log n)$
edges such that 
\begin{equation*}
  \forall S\subseteq V, \qquad \Cut_{G_{\eps}}(S)\in(1\pm\eps)\Cut_G(S),
\end{equation*}
where $n=|V|$ and $\Cut_G(S)$
denotes the total weight of edges in $G$ with exactly one endpoint in $S$. The
bound on the number of edges was later improved to $O(\eps^{-2}n)$ by Batson,
Spielman, and Srivastava~\cite{Bats}. Moreover, the bound $O(\eps^{-2}n)$ is
known to be tight by the work of Andoni, Chen, Krauthgamer, Qin, Woodruff, and
Zhang~\cite{And}.

The original motivation for cut sparsification was to speed up algorithms for
cut problems and graph problems more generally. The idea turned out to be very
influential, with several generalisations and extensions, including, for
instance, sketching~\cite{Ahn09:icalp,And}, sparsifiers for cuts in
hypergraphs~\cite{Kog,Newman13:sicomp}, and spectral
sparsification~\cite{Spi,Spielman04:stoc,Spielman11:sicomp-graph,Fung11:stoc,Soma19:soda}.

Filtser and Krauthgamer~\cite{Fil} considered the following natural question: which binary
Boolean CSPs are sparsifiable? In order to state their results as well as our
new results, we will now define binary constraint satisfaction problems. 

An instance of the binary\footnote{Some papers use the term
\emph{two-variable}.} \emph{constraint satisfaction problem} (CSP) is a
quadruple $\I=(V,D,\Pi,w)$, where $V$ is a set of variables, $D$ is a finite set
called the \emph{domain},\footnote{Some papers use the term \emph{alphabet}.}
$\Pi$ is a set of constraints, and $w:\Pi\to\RR$ are positive weights for the
constraints. Each constraint $\pi\in\Pi$ is a pair $\tuple{(u,v),P}$, where
$(u,v)\in V^2$, called the constraint \emph{scope}, is a pair of distinct
variables from $V$, and $P:D^2\to\{0,1\}$ is a binary predicate. A CSP
instance is called \emph{Boolean} if $|D|=2$, i.e., if the domain is of size
two.\footnote{Some papers use the term \emph{binary} to mean domains of size
two. In this paper, \emph{Boolean} always refers to a domain of size two and
\emph{binary} always refers to the arity of the constraint(s).}

For a fixed binary predicate $P$, we denote by $\CSPP$ the class of CSP
instances in which all constraints use the predicate $P$. Note that if we take
$D=\{0,1\}$ and $P$ defined on $D^2$ by $P(x,y)=1$ iff $x\neq y$ then \CSP($P$)
corresponds to the cut problem. 

We say that a constraint $\pi=\tuple{(u,v), P}$ is \emph{satisfied} by an
assignment $A:V\to D$ if
$P(A(u),A(v)) = 1$. The value of an instance $I=(V,D,\Pi,w)$ under an assignment $A:V\to
D$ is
defined to be the total weight of satisfied constraints:
\begin{equation*}
\Val_{\I}(A) = \sum_{\pi=\tuple{(u,v),P} \in \Pi} w(\pi) P(A(u),A(v)).
\end{equation*}
For $0<\eps<1$, an $\eps$-sparsifier of $\I=(V,D,\Pi,w)$ is a re-weighted subinstance
$\I_{\eps}=(V,D,\Pi_\eps\subseteq\Pi,w_\eps)$ of $\I$ such that
\begin{equation*} 
\forall A:V\to D, \qquad\Val_{\I_{\eps}}(A) \in (1 \pm \eps)\Val_{\I}(A).
\end{equation*}
The goal is to 
obtain a sparsifier with the minimum number of constraints, i.e., $|\Pi_\eps|$.

A binary predicate $P$ is called \emph{sparsifiable} if for every instance
$\I\in\CSPP$ on $n=|V|$ variables
and for every $0<\eps<1$ there is an $\eps$-sparsifier for $I$ with
$O(\eps^{-2}n)$ constraints.

We call a (not necessarily Boolean or binary) predicate $P$ a \emph{singleton}
if $|P^{-1}(1)|=1$.

Filtser and Krauthgamer showed, among other results, the following
classification. Let $P$ be a binary Boolean predicate. Then, $P$ is
sparsifiable if and only if $P$ is not a singleton.\footnote{Filtser and
Krauthgamer use the term \emph{valued CSPs} for what we defined as CSPs. We
prefer CSPs in order to distinguish them from the much more general framework of
valued CSPs studied in~\cite{Kolmogorov17:sicomp}.} In other words, the only
predicates that are not sparsifiable are those with support of size one. 

\paragraph{Contributions}
As our main contribution, we identify in Theorem~\ref{thm:main} the precise
borderline of sparsifiability for binary predicates on arbitrary finite domains,
thus extending the work from~\cite{Fil} on Boolean predicates. Let $P$ be a
binary predicate defined on an arbitrary finite domain $D$. Then, $P$ is
sparsifiable if and only if $P$ does not ``contain'' a singleton subpredicate.
More precisely, we say that $P$ ``contains'' a singleton subpredicate if there
are two (not necessarily disjoint) subdomains $B, C\subseteq D$ with $|B|=|C|=2$
such that the restriction of $P$ onto $B\times C$ is a singleton predicate.

The crux of Theorem~\ref{thm:main} is the sparsifiability part, which is
established by a reduction to cut sparsifiers. Unlike in the classification of
binary Boolean predicates from~\cite{Fil}, we do not rely on a case analysis
that differs for different sparsifiable predicates but instead give a simpler
argument for all sparsifiable predicates. The idea is to reduce (the graph of)
any CSP instance, as was done in~\cite{Fil}, to a new graph via the so-called bipartite double
cover~\cite{Brualdi80:jgt}. However, there is no natural assignment in the
new graph (as it was in the Boolean case in~\cite{Fil}). In order to
overcome this, we define a graph $G^P$ whose edges correspond to the support of
the predicate $P$. Using a simple combinatorial argument, we show (in
Proposition~\ref{prop:bipartite}) that, under the assumption that $P$ does not
``contain'' a singleton subpredicate, the bipartite complement of $G^P$ is a
collection of bipartite cliques. This special structure allows us to find a good
assignment in the new graph.

In view of Filtser and Krauthgamer's work~\cite{Fil}, one might conjecture that
$P$ is sparsifiable if and only if $P$ is not a singleton. While it is easy to
show that if a (possibly non-binary and non-Boolean) predicate $P$ is a
singleton then $P$ is not sparsifiable (cf. Section~\ref{sec:sufficient} in the
appendix), 
our results show that the borderline of sparsifiability lies elsewhere. In
particular, by Theorem~\ref{thm:main}, there are binary non-Boolean predicates
that are not sparsifiable but are not singletons. Also, there are non-binary
Boolean predicates that are not sparsifiable but are not singletons
(cf. Section~\ref{sec:sufficient}).

We remark that the term ``sparsification'' is also used in an unrelated line of
work in which the goal is, given a CSP instance, to reduce the number
of constraints without changing satisfiability of the instance; see, e.g.,~\cite{Chen18:ipec}.

\section{Classification of Binary Predicates}

Throughout the paper we denote by $n=|V|$ the number of variables of a given CSP
instance.

The following classification of binary Boolean predicates is from~\cite{Fil}.

\begin{theorem}[\protect{\cite[Theorem~3.7]{Fil}}]\label{thm:Boolbinary}
  Let $P:\{0,1\}^2\to\{0,1\}$ be a binary Boolean predicate. Let $0<\eps<1$.
  \begin{enumerate}
      \item If $P$ is a singleton then there exists an instance $I$ of $\CSPP$ such that every
        $\eps$-sparsifier of $I$ has $\Omega(n^2)$ constraints.
      \item Otherwise, for every instance $I$ of $\CSPP$ there exists an
        $\eps$-sparsifier of $I$ with $O(\eps^{-2}n)$ constraints.
  \end{enumerate}
\end{theorem}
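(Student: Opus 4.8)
The plan is to treat the two halves of the statement separately, since they require unrelated techniques.

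For part~(1) I would exhibit a single hard instance, the ``complete bipartite'' one. Write $P^{-1}(1)=\{(a,b)\}$, take $V=\{x_1,\dots,x_m\}\cup\{y_1,\dots,y_m\}$ with $n=2m$, and include a unit-weight constraint $\langle(x_i,y_j),P\rangle$ for every pair $(i,j)$. Each constraint can be singled out: for a fixed $(i_0,j_0)$ consider the assignment sending $x_{i_0}\mapsto a$, $x_i\mapsto 1-a$ for $i\neq i_0$, $y_{j_0}\mapsto b$ and $y_j\mapsto 1-b$ for $j\neq j_0$; because $P$ is a singleton, $\langle(x_i,y_j),P\rangle$ is satisfied by this assignment exactly when $(i,j)=(i_0,j_0)$, so the instance has value $1$ there. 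Hence every $\eps$-sparsifier must give that assignment value at least $1-\eps>0$ and therefore must keep the constraint $\langle(x_{i_0},y_{j_0}),P\rangle$; ranging over all $(i_0,j_0)$ forces $|\Pi_\eps|\geq m^2=\Omega(n^2)$.

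For part~(2) we have $|P^{-1}(1)|\in\{0,2,3,4\}$. If $P$ is constant the statement is trivial (the empty instance, resp.\ one constraint carrying the whole weight, is an exact sparsifier), so suppose $|P^{-1}(1)|\in\{2,3\}$. Modulo relabelling the domain and swapping the two coordinates of $P$---operations that visibly preserve sparsifiability---the predicates left are disequality, equality, the dictator predicates, disjunction, and implication. Disequality needs nothing new: $\Val_\I(A)$ is a cut in the weighted graph $G$ whose edges are the scopes of $\I$, so a Bencz\'ur--Karger cut sparsifier of $G$, read back as disequality constraints, already works. For a dictator, $\Val_\I(A)$ depends only on the $n$ weighted ``half-degrees'' of the variables, and keeping one suitably re-weighted constraint per variable reproduces it exactly with $\leq n$ constraints. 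For the remaining predicates I would reduce to cut sparsification via the bipartite double cover: place $\widehat G$ on $V^+\sqcup V^-$, together with one extra vertex $r$ in the disjunction and implication cases, and note that for an assignment $A$, writing $S_A=\{x^+:A(x)=1\}\cup\{x^-:A(x)=0\}$ with $r\notin S_A$, the edges $\{u^+,v^-\}$, $\{u^+,v^+\}$, $\{u^+,r\}$ and $\{u^+,u^-\}$ cross the cut $(S_A,\overline{S_A})$ precisely when $A(u)=A(v)$, when $A(u)\neq A(v)$, when $A(u)=1$, and always, respectively. The combinatorial crux is that the value of every \emph{non-singleton} binary Boolean $P$ at $(A(u),A(v))$ is a \emph{nonnegative} combination of these ``atoms'' (a quick check over the twelve non-singleton predicates), whereas it is not for the singletons---for instance $[A(u)=A(v)=1]$ admits no such combination---which is exactly where the hypothesis is used. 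Weighting the edges of $\widehat G$ by this decomposition gives $\mathrm{Cut}_{\widehat G}(S_A)=\Val_\I(A)$ for every $A$; a cut sparsifier of $\widehat G$ then has $O(\eps^{-2}n)$ edges, and reading it back yields the desired $\I_\eps$.

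The step I expect to be the real obstacle is reading the cut sparsifier back as a CSP($P$) instance. For equality it is harmless: $\widehat G$ is bipartite between $V^+$ and $V^-$, so every sparsifier edge is a $\{x^+,y^-\}$ and becomes an equality constraint on $\{x,y\}$, reproducing the cut up to the $(1\pm\eps)$ factor. For disjunction and implication the auxiliary graph carries the root $r$, and an arbitrary cut sparsifier of it need not be the auxiliary graph of any CSP($P$) instance---its root-incident edges will not, in general, recombine with the rest into $P$-constraints. The way around this, which I would make precise, is to drop the insistence on an exact match: build $\I_\eps$ from the non-root edges of the sparsifier, so that $\Val_{\I_\eps}(A)$ is again the same nonnegative combination of atoms, with the ``cut'' atoms evaluated in the sparsified graph and the ``degree'' atoms rescaled to its degrees. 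Since a cut sparsifier preserves \emph{all} cuts multiplicatively---including the singleton cuts, i.e.\ the weighted degrees---each atom-term stays within a factor $(1\pm\eps)$ of its original value, and because all terms are nonnegative so does their sum, giving $\Val_{\I_\eps}(A)\in(1\pm\eps)\Val_\I(A)$ with $|\Pi_\eps|=O(\eps^{-2}n)$. Pinning down this accounting, especially for the non-symmetric implication predicate where the extra vertex is genuinely needed, is the main technical point.
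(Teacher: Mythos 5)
Your part~(1) is correct and is essentially the paper's own lower-bound construction (Theorem~\ref{thm:not} specialised to the Boolean case): the complete bipartite instance together with the isolating assignments. For part~(2) you take a different route from the paper, which proves the upper bound uniformly (Theorem~\ref{thm:spars}): it builds the auxiliary graph $G^P$ on the labels, shows via Proposition~\ref{prop:bipartite} that its bipartite complement is a disjoint union of bicliques, and thereby expresses $\Val_{I}(A)$ as a \emph{single} $\ell$-Cut value on the bipartite double cover, whose edges are in bijection with the constraints, so an $\ell$-Cut sparsifier pulls back verbatim. Your case analysis is fine for the constant, dictator, equality and disequality predicates, but two things go wrong in the support-three cases. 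First, a repairable slip: with the atoms you actually list ($[x=y]$, $[x\neq y]$, $[x=1]$, $[y=1]$, constant), implication $x\to y$ is \emph{not} a nonnegative combination: evaluating at $(1,0)$ forces the coefficients of $[x\neq y]$, $[x=1]$ and the constant to vanish, and then $(1,1)$ gives $2\neq 1$. You need the complementary degree atoms $[x=0]$, $[y=0]$ (root edges from the minus-copies), after which $\tfrac12([x=0]+[y=1]+[x=y])$ works.

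The genuine gap is the read-back step that you yourself flag as the crux. A cut sparsifier of the root-augmented graph only guarantees that \emph{total} cut weights are preserved; it says nothing about the contribution of the non-root edges to those cuts taken separately, nor does it tie the retained root-edge weights to the retained constraint-edge weights. But in a $\CSPP$ subinstance the cut-type and degree-type contributions of a constraint $(u,v)$ are carried by one and the same weight $w_\eps(u,v)$, whereas the sparsifier re-weights $\{u^+,v^-\}$, $\{u^-,r\}$ and $\{v^+,r\}$ independently (and may drop some of them). So the claim that ``each atom-term stays within a factor $(1\pm\eps)$'' does not follow from the sparsification guarantee you invoke, and building $I_\eps$ from the non-root edges is unjustified as stated. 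The fix is to drop the root vertex entirely: every atom is already a cut of the plain bipartite double cover with the original constraint weights --- $S=\{x^{(0)}:A(x)=1\}$ gives the $[A(u)=1]$ sums, $S=\{x^{(1)}:A(x)=1\}$ the $[A(v)=1]$ sums (similarly for the $=0$ versions), $S$ equal to all the $0$-copies gives the constant, and your two mixed sets give the equality and disequality sums. Hence a single cut sparsifier of the double cover, whose edges biject with the constraints and therefore read back directly as a re-weighted subinstance, preserves all atom sums simultaneously, and nonnegativity of the coefficients finishes the argument. With that correction your proof is sound and is in fact close in spirit to the paper's reduction (which packages the same idea as one $\ell$-Cut value); note also that the $O(\eps^{-2}n)$ bound requires the Batson--Spielman--Srivastava sparsifiers, not Bencz\'ur--Karger, which only gives $O(\eps^{-2}n\log n)$.
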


We denote by $\binom{D}{2}=\{B\subseteq D: |B|=2\}$ the set of two-element subsets
of $D$. For a binary predicate $P:D^2\to\{0,1\}$ and $B,C \in \binom{D}{2}$,
$\Crestrict{P}{B\times C}$ denotes the restriction of $P$ onto $B\times C$.

The following is our main result, generalising Theorem~\ref{thm:Boolbinary} to
arbitrary finite domains.

\begin{theorem}[\textbf{Main}]\label{thm:main}
  Let $P:D^2\to\{0,1\}$ be a binary predicate, where $D$ is a finite set with $|D|\geq 2$. Let $0<\eps<1$.
  \begin{enumerate}
      \item If there exist $B,C \in \binom{D}{2}$ such that
        $\Crestrict{P}{B\times C}$ is a singleton then there exists an instance
        $I$ of $\CSPP$ such that every $\eps$-sparsifier of $I$ has $\Omega(n^2)$ constraints.
      \item Otherwise, for every instance $I$ of $\CSPP$ there exists an
        $\eps$-sparsifier of $I$ with $O(\eps^{-2}n)$ constraints.
  \end{enumerate}
\end{theorem}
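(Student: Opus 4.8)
The plan is to handle the two parts separately, with the first (hardness) part being a routine adaptation of the Boolean lower bound and the second (sparsifiability) part being the main work.

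\textbf{Part 1 (hardness).} Suppose there are $B=\{b_0,b_1\}$ and $C=\{c_0,c_1\}$ in $\binom{D}{2}$ with $\Crestrict{P}{B\times C}$ a singleton; by relabelling say $P(b_1,c_1)=1$ and $P(b_i,c_j)=0$ otherwise. I would invoke Theorem~\ref{thm:Boolbinary}(1): it produces, for the Boolean singleton predicate $Q:\{0,1\}^2\to\{0,1\}$ with $Q^{-1}(1)=\{(1,1)\}$, an instance $I_Q$ on $n$ variables all of whose $\eps$-sparsifiers have $\Omega(n^2)$ constraints. Pull $I_Q$ back to a $\CSPP$ instance $I$ on the same variable set by keeping the same scopes and the same weights but replacing every predicate $Q$ by $P$. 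The key point is to restrict attention to assignments $A:V\to D$ taking values only in $B$ on the first coordinate of each scope and only in $C$ on the second — but since scopes mix both roles, the cleanest route is instead to only consider assignments $A:V\to\{b_1,\dots\}$; more carefully, I would use the substitution $0\mapsto$ (the ``$c_0$/$b_0$'' value) and $1\mapsto$ (the ``$c_1$/$b_1$'' value), noting that $B$ and $C$ need not be disjoint, so in the disjoint case one must be slightly careful and in the non-disjoint case (e.g.\ $b_0=c_0$ or $b_1=c_1$) it is even easier. On such assignments $\Val_I(A)=\Val_{I_Q}(A')$ exactly, so an $\eps$-sparsifier of $I$ restricts to an $\eps$-sparsifier of $I_Q$ and hence must have $\Omega(n^2)$ constraints. (The only subtlety is ensuring the pulled-back instance and the Boolean instance agree on enough assignments; this is where I would spend a paragraph of care, but it is not conceptually hard.)

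\textbf{Part 2 (sparsifiability).} Assume no $2\times 2$ restriction of $P$ is a singleton. Following the outline in the introduction, form the ``support graph'' $G^P$ on vertex set $D\sqcup D$ (two disjoint copies) with an edge between $d$ in the left copy and $d'$ in the right copy iff $P(d,d')=1$; this is a bipartite graph. Proposition~\ref{prop:bipartite} (which I may assume) tells me that the bipartite complement of $G^P$ is a disjoint union of bipartite cliques, equivalently that $G^P$ itself is a ``bipartite cograph-like'' object: there is a partition of the left copy into blocks $L_1,\dots,L_k$ and of the right copy into blocks $R_1,\dots,R_k$ such that $P(d,d')=1$ exactly when $d\in L_i$ and $d'\in R_j$ with $i\ne j$ — i.e.\ $G^P$ is a complete multipartite-type bipartite graph, a ``blow-up of a complete bipartite graph minus a perfect matching of blocks.'' Now given an instance $I=(V,D,\Pi,w)$, build a graph $H$ on vertex set $V\times[k]$ (or $V\sqcup V$ suitably blown up): for each constraint $\langle(u,v),P\rangle$ of weight $w(\pi)$, put edges between $(u,i)$ and $(v,j)$ for all $i\ne j$, each of weight $w(\pi)$. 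The bipartite-double-cover idea is that an assignment $A:V\to D$ corresponds to choosing, for each $u$, the block index $\beta(A(u))\in[k]$ with $A(u)$; then $P(A(u),A(v))=1$ iff $\beta(A(u))\ne\beta(A(v))$, so $\Val_I(A)$ equals the weight of a suitable ``multiway-cut-like'' quantity in $H$ — and crucially this quantity is a sum of ordinary cut values, so a Benczúr–Karger cut sparsifier of $H$ (which has $O(\eps^{-2}|V(H)|)=O(\eps^{-2}kn)=O(\eps^{-2}n)$ edges since $k\le|D|=O(1)$) yields, after pulling the surviving edges back to the constraints that created them, an $\eps$-sparsifier of $I$.

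\textbf{Main obstacle.} The hard part is making the reduction in Part 2 actually preserve values \emph{multiplicatively and exactly}, not just approximately, and in the right direction: I need that (i) distinct constraints map to distinct edges (or, if they collide, that weights add consistently so that pulling back is well-defined — here I would let the sparsifier keep an edge and then reinstate all original constraints that mapped to it, reweighted proportionally), and (ii) $\Val_I(A)$ is \emph{exactly} a fixed linear combination of cut values $\Cut_H(S)$ over a family of sets $S$ determined by $A$, uniformly over all $A$, so that the $(1\pm\eps)$ guarantee transfers. Concretely, with $S_i=\{(u,i):A(u)=$ something$\}$ — the precise set family coming from the block structure — one shows $\Val_I(A)=\sum_i \alpha_i\,\Cut_H(S_i)$ with fixed positive $\alpha_i$; verifying this identity, using exactly the structure from Proposition~\ref{prop:bipartite} (that ``$P=1$ iff different blocks''), is the crux. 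Once that identity is in hand, linearity of the $(1\pm\eps)$ bound over the finitely many terms finishes it, and the edge-count bound is immediate from $|V(H)|=kn=O(n)$.
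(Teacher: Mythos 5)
Your overall strategy is the paper's (support graph $G^P$, Proposition~\ref{prop:bipartite}, a double-cover-type reduction to cut sparsifiers), but Part~2 as written has a genuine gap at exactly the step you call the crux. Two concrete problems. First, a single block index $\beta$ on $D$ is not well defined: Proposition~\ref{prop:bipartite} gives a partition of the \emph{left} copy of $D$ paired with a partition of the \emph{right} copy, and the same domain element can lie in differently indexed blocks on the two sides (in the paper's Figure~\ref{fig:col}, $v_0$ and $v'_0$ receive different colours); the correct statement is $P(A(u),A(v))=1$ iff $\beta_L(A(u))\neq\beta_R(A(v))$, and this is precisely why one must pass to the bipartite double cover and colour the two copies of each variable separately rather than assign one index per variable. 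Second, the claimed identity $\Val_I(A)=\sum_i\alpha_i\Cut_H(S_i)$ with \emph{fixed positive} $\alpha_i$ in your blown-up graph $H$ on $V\times[k]$ is not established, and the natural instantiation fails: if for each vertex you put its ``active'' copy into $S$, then each constraint of weight $w$ contributes crossing weight $2(k-2)w$ when satisfied and $2(k-1)w$ when not, so $\Cut_H(S)=2(k-1)W-2\Val_I(A)$ with $W$ the total weight --- an affine relation with a \emph{negative} coefficient, and $(1\pm\eps)$ multiplicative guarantees do not survive subtraction (this is exactly the phenomenon that makes singleton predicates non-sparsifiable). The paper avoids the blow-up altogether: it keeps one edge per constraint in $\gamma(G)$ (which has only $2n$ vertices), colours the $2r$ vertices of $G^P$ by the connected components of $\overline{G^P}$, and uses the direction ``$P=0\Rightarrow$ same colour'' guaranteed by Proposition~\ref{prop:bipartite} to ensure that an edge of $\gamma(G)$ crosses the induced $\ell$-partition \emph{iff} its constraint is satisfied; hence $\Val_I(A)$ is exactly an $\ell$-Cut value of $\gamma(G)$, which is the fixed positive combination $\tfrac12\sum_c\Cut_{\gamma(G)}(U_c)$ of ordinary cuts, and a cut sparsifier of $\gamma(G)$ with $O(\eps^{-2}n)$ edges transfers back via Proposition~\ref{cutCoverBipartite}. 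Your argument needs either this colouring-based identity or a verified replacement for your gadget; as it stands the key identity is asserted, not proved, and the obvious candidates are false.

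Part~1 also has a (smaller, fixable) hole: invoking Theorem~\ref{thm:Boolbinary}(1) as a black box does not let you pull the hard Boolean instance back to $\CSPP$ when $B\cap C=\emptyset$, because a variable that occurs both as a first and as a second coordinate would need a single value of $D$ simulating a $B$-value in one role and a $C$-value in the other, and nothing in the black-box statement guarantees the hard instance is bipartite in this sense. The remedy is to construct the instance directly, which is what the paper does (Theorem~\ref{thm:not}): take all $n^2$ constraints on scopes $X\times Y$ with $X,Y$ disjoint variable sets, and for each $(i,j)$ the assignment sending $x_i\mapsto b$, the rest of $X$ to $b'$, $y_j\mapsto c$, the rest of $Y$ to $c'$ satisfies only $\pi_{ij}$, so every constraint must survive in any $\eps$-sparsifier.
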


The rest of this section is devoted to proving Theorem~\ref{thm:main}.

First we introduce some useful notation. We set $[r]=\{0,1,\ldots,r-1\}$. We
denote by $X\sqcup Y$ the disjoint union of $X$ and $Y$. For any $r\geq 2$, we
define $r\textnormal{-}\Cut:[r]^2\to\{0,1\}$ by $r\textnormal{-}\Cut(x,y)=1$ if
and only if $x\neq y$. 

Given an instance $\I = (V,D,\Pi,w) \in \CSPP$, we denote by $G^{\I}$ the
\emph{corresponding graph} of $\I$; that is, $G^{\I} = (V,E,w)$ is a weighted directed
graph with $E= \{(u,v):\tuple{(u,v),P} \in \Pi\}$ and $w(u,v) = w((u,v),P)$.
Conversely, given a weighted directed graph $G=(V,E,w)$ and a predicate $P:D^{2}
\to \{0,1\}$, the \emph{corresponding $\CSPP$ instance} is $\I^{G,P} =
(V,D,\Pi,w)$, where $\Pi=\{\tuple{e,P} : e \in E\}$ and $w(e,P) = w(e)$.
Hence, we can equivalently talk about instances of \CSP($P$) or (weighted
directed) graphs. Thus, an $\eps$-$P$-\emph{sparsifier} of a graph $G=(V,E,w)$
is a subgraph $G_{\eps} = (V, E_{\eps} \subseteq E, w_{\eps})$ whose
corresponding $\CSPP$ instance $\I^{G_{\eps},P}$ is an $\eps$-sparsifier of the
corresponding $\CSPP$ instance $\I^{G,P}$ of $G$.

Case~(1) of Theorem~\ref{thm:main} is established by the following result.

\begin{theorem}\label{thm:not}
  Let $P:D^2\to\{0,1\}$ be a binary predicate. Assume that there exist $B,C
  \in \binom{D}{2}$ such that $\Crestrict{P}{B\times C}$ is a singleton.
  For any $n$ there is a $\CSPP$ instance $I$ with $2n$ variables and $n^2$
  constraints such that for any
  $0<\eps<1$ it holds that any $\eps$-sparsifier of $I$ has $n^2$ constraints.
\end{theorem}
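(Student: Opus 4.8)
The plan is to reduce to the known Boolean lower bound from Theorem~\ref{thm:Boolbinary}(1) by exploiting the structure of the singleton restriction. Suppose $B=\{b_0,b_1\}$ and $C=\{c_0,c_1\}$ are such that $\Crestrict{P}{B\times C}$ is a singleton, i.e.\ there is exactly one pair $(b_i,c_j)\in B\times C$ with $P(b_i,c_j)=1$; after relabelling we may assume this pair is $(b_0,c_0)$, so that $P(b_0,c_0)=1$ while $P(b_0,c_1)=P(b_1,c_0)=P(b_1,c_1)=0$. The key observation is that this is exactly the structure of the Boolean AND-type singleton predicate, and it lets us encode a ``hard'' combinatorial gadget.

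\medskip

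\noindent\textbf{Construction.} First I would build the hard instance directly rather than quoting the Boolean theorem verbatim, since the Boolean theorem is stated for instances whose variables take values in $\{0,1\}$ and here the relevant behaviour only appears on a $2\times 2$ sub-box of $D^2$. Take $2n$ variables partitioned as $U=\{u_1,\dots,u_n\}$ and $W=\{w_1,\dots,w_n\}$, and put one constraint $\tuple{(u_k,w_\ell),P}$ of weight $1$ for every pair $(k,\ell)\in[n]\times[n]$ (using the conventions in the excerpt, indices in $[n]=\{0,\dots,n-1\}$ or $\{1,\dots,n\}$ — I will be consistent), giving $n^2$ constraints. For each pair $(a,b)\in[n]^2$ define the assignment $A_{a,b}$ that sends $u_a\mapsto b_0$, every other $u_k\mapsto b_1$, $w_b\mapsto c_0$, and every other $w_\ell\mapsto c_1$. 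Then by the case analysis of the four values of $P$ on $B\times C$, the constraint $\tuple{(u_k,w_\ell),P}$ is satisfied by $A_{a,b}$ if and only if $(k,\ell)=(a,b)$; hence $\Val_I(A_{a,b})=1$ for every $(a,b)$.

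\medskip

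\noindent\textbf{Why no constraint can be dropped.} Now suppose $I_\eps=(V,D,\Pi_\eps,w_\eps)$ is any $\eps$-sparsifier with $0<\eps<1$, and suppose for contradiction that some constraint $\tuple{(u_a,w_b),P}$ is \emph{not} in $\Pi_\eps$. Evaluate $I_\eps$ on the assignment $A_{a,b}$: by the previous paragraph the only constraint of $I$ satisfied by $A_{a,b}$ is the missing one, so every constraint remaining in $\Pi_\eps$ is unsatisfied by $A_{a,b}$, giving $\Val_{I_\eps}(A_{a,b})=0$. But $\Val_{I}(A_{a,b})=1$, and the sparsifier condition requires $\Val_{I_\eps}(A_{a,b})\in(1\pm\eps)\cdot 1$, which is a strictly positive interval since $\eps<1$ — contradiction. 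Hence $\Pi_\eps=\Pi$ and every $\eps$-sparsifier has exactly $n^2$ constraints.

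\medskip

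\noindent The only mild subtlety — and the one place to be careful — is the relabelling step: a general singleton restriction on $B\times C$ might have its unique $1$-entry at $(b_1,c_0)$, $(b_0,c_1)$, or $(b_1,c_1)$ instead of $(b_0,c_0)$, but in each case one simply swaps the roles of $b_0\leftrightarrow b_1$ and/or $c_0\leftrightarrow c_1$ in the definition of $A_{a,b}$, and the same argument goes through; no case analysis beyond this symmetry is needed. I would also note explicitly that $B$ and $C$ need not be disjoint (they can even be equal), which is harmless because the $u$-variables are only ever assigned values in $B$ and the $w$-variables only values in $C$, so the construction never needs to distinguish a shared element — the assignment is well-defined regardless. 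This matches the statement of Theorem~\ref{thm:not} and establishes Case~(1) of Theorem~\ref{thm:main}.
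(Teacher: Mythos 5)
Your proof is correct and follows essentially the same route as the paper: the same bipartite complete instance on $2n$ variables, the same family of assignments singling out one constraint at a time, and the same observation that dropping any constraint forces the sparsified value to $0 \notin (1\pm\eps)\Val_I(A_{a,b})$. The points you flag (relabelling the unique $1$-entry, $B$ and $C$ possibly intersecting) are handled in the paper exactly as you handle them.
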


\begin{proof}
  Suppose $B=\{b,b'\}$, $C=\{c,c'\}$ and assume without loss of
  generality that $\Crestrict{P}{B\times C}^{-1}(1)=\{(b,c)\}$; that is, the
  support of $\Crestrict{P}{B\times C}$ is equal to $\{(b,c)\}$.
  Consider a $\CSPP$ instance $\I=(V,D,\Pi,w)$, where
\begin{itemize}
\item $V=X \sqcup Y$, $X=\{x_1, \ldots, x_n\}$, and $Y=\{y_1, \ldots, y_n\}$;
\item $\Pi=\{\pi_{ij}=\tuple{(x_i,y_j), P} : 1\leq i,j\leq n\}$;
\item $w$ are arbitrary positive weights.
\end{itemize}
We have $|\Pi|= n^2$. 
We note that $B$ and $C$ may not be disjoint.
We consider the family of assignments $A_{ij}:V \to B \cup C$ for $1\leq i,j\leq
  n$ such that $A_{ij}(x_i)=b$, $A_{ij}(x)=b'$ for every $x\in
  X\setminus\{x_i\}$, $A_{ij}(y_j)=c$, and $A_{ij}(y)=c'$ for every $y\in
  Y\setminus\{y_j\}$. Then, we have
\[P(A_{ij}(u,v))= \begin{cases}
  P(b,c) \ \ =1 & \quad \textnormal{if } u=x_i, v=y_j,\\
  P(b,c') \  =0 & \quad \textnormal{if } u=x_i, v \in Y \setminus \{y_j\},\\
  P(b',c) \ =0 & \quad \textnormal{if } u \in X \setminus \{x_i\}, v=y_{j},\\
  P(b',c') =0 & \quad \textnormal{if } u \in X \setminus \{x_i\}, v \in Y \setminus \{y_j\}.\\
\end{cases}\]
Therefore, \[\Val_{\I}(A_{ij})= \sum_{\pi \in \Pi}w(\pi)P(A_{ij}(\pi))=
  w(\pi_{ij}) > 0.\]
Hence, if $\I_{\eps}=(V,D,\Pi_{\eps},w_{\eps})$ is an $\eps$-sparsifier of $\I$,
  we must have that $\pi_{ij} \in \Pi_{\eps}$ for every $1\leq i,j\leq n$, as
  otherwise we would have \[\Val_{\I_{\eps}}(A_{ij})= \sum_{\pi \in \Pi_{\eps}}w_{\eps}(\pi)P(A_{ij}(\pi))= 0 \notin (1\pm \eps)\Val_{\I}(A_{ij}).\]
Therefore, we have $\Pi_{\eps}=\Pi$ and hence $|\Pi_{\eps}|=|\Pi|=n^2$.
\end{proof}

The main tool used in the proof of Theorem~\ref{thm:Boolbinary}~(2) from~\cite{Fil} is a graph transformation known as
the bipartite double cover~\cite{Brualdi80:jgt}, which allows for a reduction to
cut sparsifiers~\cite{Bats}. 

\begin{definition}
  
For a weighted directed graph $G=(V,E,w)$, the
  \emph{bipartite double cover} of $G$ is the weighted directed graph
$\gamma(G)=(V^{\gamma}, E^{\gamma}, w^{\gamma})$, where
\begin{itemize}
\item $V^{\gamma} = \{v^{(0)},v^{(1)}:v \in V\}$;
\item $E^{\gamma} = \{(u^{(0)}, v^{(1)}): (u,v) \in E\}$;
\item $w^{\gamma}(u^{(0)}, v^{(1)}) = w(u,v)$.
\end{itemize}
\end{definition}

Given an assignment $A:V \to [r]$, 
we let $\mathcal{A}=(A_{0}, \ldots, A_{r-1})$ be the induced $r$-partition of $V$, where $A_{j} = A^{-1}(j)$. For
a binary predicate $P:[r]^{2} \to \{0,1\}$ and an instance $\I=(V,[r],\Pi,w) \in
\CSP(P)$, we define $\Val_{\I}(\mathcal{A}) = \Val_{\I}(A)$. Moreover, for a
weighted directed graph $G$ and a binary predicate $P$, we define
$\Val_{G,P}(\mathcal{A})=\Val_{\I^{G,P}}(\mathcal{A})$. We denote the set of all
$r$-partitions of $V$ by $Part_{r}(V)$. 

For any $r$-partition $\mathcal{A}=(A_{0}, \ldots, A_{r-1})$ of the vertices
of $V$, let $A_{i}^{(j)} = \{v^{(j)}: v \in A_{i}\}$. Thus
$\mathcal{A}^{\gamma} = (A^{(0)}_{0},A^{(1)}_{0}, \ldots,
A^{(0)}_{r-1}, A^{(1)}_{r-1})$ is a $2r$-partition of the vertices of
$V^{\gamma}$.

We use an argument from the proof of Theorem~\ref{thm:Boolbinary}\,(2)
from~\cite{Fil} and apply it to non-Boolean predicates. 

\begin{proposition} \label{cutCoverBipartite}
Let $P:[r]^{2} \to \{0,1\}$ and $P':[r']^{2} \to \{0,1\}$ be binary predicates.
Suppose that there is a function $f_{P}: Part_{r}(V) \to Part_{r'}(V^{\gamma})$ such that for any weighted directed graph $G$ on $V$ and for any $r$-partition $\mathcal{A} \in Part_{r}(V)$ it holds that 
\[\Val_{G,P}(\mathcal{A})= \Val_{\gamma(G),P'} (f_{P}(\mathcal{A})),\]
where $\gamma(G)=(V^{\gamma},E^{\gamma},w^{\gamma})$ is the bipartite double cover of $G$. 
  If there is an $\eps$-$P'$-sparsifier of $\gamma(G)$ of size $g(n)$ then there
  is an $\eps$-$P$-sparsifier of $G$ of size $g(n)$.
\end{proposition}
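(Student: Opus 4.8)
The plan is to pull the sparsifier back from the double cover. Fix a weighted directed graph $G=(V,E,w)$ and let $\gamma(G)=(V^{\gamma},E^{\gamma},w^{\gamma})$ be its bipartite double cover. The first observation is that the map $(u,v)\mapsto(u^{(0)},v^{(1)})$ is a bijection between $E$ and $E^{\gamma}$. Consequently every subgraph of $\gamma(G)$ arises as $\gamma(G')$ for a unique subgraph $G'$ of $G$: given $H=(V^{\gamma},F\subseteq E^{\gamma},w_{H})$, set $E_{\eps}=\{(u,v)\in E:(u^{(0)},v^{(1)})\in F\}$ and $w_{\eps}(u,v)=w_{H}(u^{(0)},v^{(1)})$; then for $G_{\eps}=(V,E_{\eps},w_{\eps})$ one has $\gamma(G_{\eps})=H$ directly from the definition of $\gamma$, and in particular $|E_{\eps}|=|F|$.

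Next I would take $H$ to be an $\eps$-$P'$-sparsifier of $\gamma(G)$ of size $g(n)$, which exists by hypothesis, and let $G_{\eps}$ be its pullback as above, so that $|E_{\eps}|=g(n)$ and $\gamma(G_{\eps})=H$. It then remains only to verify that $G_{\eps}$ is an $\eps$-$P$-sparsifier of $G$, i.e.\ that $\Val_{G_{\eps},P}(\mathcal{A})\in(1\pm\eps)\Val_{G,P}(\mathcal{A})$ for every $r$-partition $\mathcal{A}\in Part_{r}(V)$.

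This follows by chaining the commutation identity with the sparsifier guarantee. Fix $\mathcal{A}$. Applying the hypothesis to the graph $G_{\eps}$ gives $\Val_{G_{\eps},P}(\mathcal{A})=\Val_{\gamma(G_{\eps}),P'}(f_{P}(\mathcal{A}))=\Val_{H,P'}(f_{P}(\mathcal{A}))$, using $\gamma(G_{\eps})=H$. Since $f_{P}(\mathcal{A})$ is a particular $r'$-partition of $V^{\gamma}$ and $H$ is an $\eps$-$P'$-sparsifier of $\gamma(G)$, the defining property of the sparsifier yields $\Val_{H,P'}(f_{P}(\mathcal{A}))\in(1\pm\eps)\Val_{\gamma(G),P'}(f_{P}(\mathcal{A}))$. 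Finally, applying the hypothesis to $G$ itself gives $\Val_{\gamma(G),P'}(f_{P}(\mathcal{A}))=\Val_{G,P}(\mathcal{A})$. Combining the three steps proves the claim.

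There is no substantive obstacle here; the one point requiring care is the bijectivity of $\gamma$ on edge sets, which is what makes the pullback $G_{\eps}$ well defined and preserves the sparsifier size exactly. One should also note that the guarantee for $H$ is needed only at the single partition $f_{P}(\mathcal{A})$ for each $\mathcal{A}$, and this is available for free because an $\eps$-$P'$-sparsifier is required to preserve the value on \emph{all} partitions of $V^{\gamma}$, not merely those in the image of $f_{P}$.
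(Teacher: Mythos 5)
Your proposal is correct and follows essentially the same route as the paper's proof: pull the sparsifier of $\gamma(G)$ back along the edge bijection $(u,v)\mapsto(u^{(0)},v^{(1)})$ to get $G_{\eps}$ with $\gamma(G_{\eps})$ equal to the sparsifier, then chain the identity for $G_{\eps}$, the sparsifier guarantee at $f_{P}(\mathcal{A})$, and the identity for $G$. Your explicit remarks on the bijectivity of $\gamma$ on edges and on applying the hypothesis to $G_{\eps}$ (valid since it holds for any weighted directed graph on $V$) are exactly the points the paper uses implicitly.
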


\begin{proof}

  Given $G=(V,E,w)$, let $\gamma(G)_{\eps}=(V,E^{\gamma}_{\eps},w^{\gamma}_{\eps})$ be an
  $\eps$-$P'$-sparsifier (of size $g(n)$) of the bipartite double cover $\gamma(G)$ of $G$. Define
  a subgraph $G_{\eps}=(V,E_{\eps},w_{\eps})$ of $G$ 
  by $E_{\eps} = \{(u,v): (u^{(0)},v^{(1)}) \in E^{\gamma}_{\eps}\}$ and
  $w_{\eps}(u,v)=w^{\gamma}_{\eps}(u^{(0)}, v^{(1)})$. Note
  that $\gamma(G_{\eps}) = \gamma(G)_{\eps}$ and $E_{\eps} \subseteq E$.

Then, we have
  \begin{align*}
    \Val_{G_{\eps},P}(\mathcal{A}) & =
    \Val_{\gamma(G_{\eps}),P'}(f_{P}(\mathcal{A})) \\
    & = \Val_{\gamma(G)_{\eps},P'}
  (f_{P}(\mathcal{A}))
  \in (1 \pm \eps) \Val_{\gamma(G),P'} (f_{P}(\mathcal{A}))
  = (1 \pm \eps) \Val_{G,P}(\mathcal{A}).
\end{align*}
  Hence $G_{\eps}$ is also an $\eps$-$P$-sparsifier of $G$ of size $g(n)$. 
\end{proof}

We now focus on proving Case~(2) of Theorem~\ref{thm:main}. 
Assume that for any $B,C \in \binom{D}{2}$,
$\Crestrict{P}{B\times C}$ is not a singleton. Our strategy is to show that in
this case the value of a $\CSPP$ instance under any assignment can be expressed
as the value of a corresponding \CSP($\ell$-\Cut) instance (for some $\ell \leq
2|D|$) under the same assignment. 

For an undirected graph $G=(V,E)$ and a subset $U \subseteq V$, we denote the
vertex-induced subgraph on $U$ by $G[U]$ and its edge set by $E[U]$. For a
possibly disconnected undirected graph $G$, we denote the connected component
containing a vertex $v$ by $G_{v}=(V(G_{v}),E(G_{v}))$. Finally, we denote the
degree of vertex $v$ in graph $G$ by $\deg_{G}(v)$.

\begin{definition}
Let $G=(U\sqcup V,E)$ be an undirected bipartite graph. The \emph{bipartite complement} $\overline{G} = (U\sqcup V,\overline{E})$ of $G$ has the following edge set:
\[\overline{E}= \{\{u,v\}:u \in U, v \in V, \{u,v\} \notin E\}.\]
\end{definition}

The following property of bipartite graphs will be crucial in the proof of
Theorem~\ref{thm:spars}.

\begin{proposition} \label{prop:bipartite}
Let $G=(U\sqcup V,E)$ be a bipartite graph with $|U|=|V|=r$, $r\geq 2$. Assume that for any $u,u' \in U$ and $v,v' \in V$ we have $|E[\{u,u',v,v'\}]| \neq 1$. Then, for any $v \in U\sqcup V$ with
$\deg_{\overline{G}}(v) > 0$, $\overline{G}_{v}$ is a complete bipartite graph with partition classes $\{U \cap V(\overline{G}_{v})\}$ and $\{V \cap V(\overline{G}_{v})\}$.
\end{proposition}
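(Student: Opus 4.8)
The plan is to show that the hypothesis ``$|E[\{u,u',v,v'\}]| \neq 1$ for all $u,u'\in U$, $v,v'\in V$'' forbids exactly the configuration that would make $\overline{G}_v$ fail to be a complete bipartite graph. Observe first that $|E[\{u,u',v,v'\}]| \neq 1$ is equivalent to saying: there is no pair $(u,v)\in U\times V$ with $\{u,v\}\in E$ but $\{u,v'\}\notin E$ and $\{u',v\}\notin E$ for some other vertices $u',v'$ — in other words, whenever an edge $\{u,v\}$ of $G$ is present, either $u$ is joined in $G$ to every vertex of $V$ except possibly those also joined to $u'$, or similarly on the $V$ side. It will be cleaner to translate everything into the bipartite complement: the condition $|E[\{u,u',v,v'\}]|\neq 1$ says precisely that $\overline{G}$ contains no induced subgraph on $\{u,u',v,v'\}$ with exactly three of the four cross-edges present (``$K_{2,2}$ minus an edge''), since a four-vertex bipartite graph has $k$ edges iff its complement has $4-k$ edges. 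So the real claim is: \emph{if $\overline{G}$ has no induced $(K_{2,2}$ minus an edge$)$, then every connected component of $\overline{G}$ with at least one edge is a complete bipartite graph.}

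The key step is then a short local argument. Fix $v$ with $\deg_{\overline{G}}(v)>0$ and let $H = \overline{G}_v$ be its component, with vertex classes $U_H = U\cap V(H)$ and $V_H = V\cap V(H)$. I claim every $u\in U_H$ is adjacent in $\overline{G}$ to every vertex of $V_H$; by symmetry this suffices. Suppose not: there is $u\in U_H$ and $v_0\in V_H$ with $\{u,v_0\}\notin\overline E$. Since $u$ and $v_0$ lie in the same connected component $H$, there is a path between them in $H$; take a shortest one. Because $H$ is bipartite, this path has even length $2k\geq 2$, and since $\{u,v_0\}\notin\overline E$ we have $k\geq 2$. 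Look at the first few vertices of the path: $u, v_1, u_1, v_2, \dots$, where $\{u,v_1\},\{v_1,u_1\},\{u_1,v_2\}\in\overline E$. Consider the four vertices $\{u,u_1,v_1,v_2\}$: the edges $\{u,v_1\}$, $\{u_1,v_1\}$, $\{u_1,v_2\}$ are all in $\overline E$; the edge $\{u,v_2\}$ is \emph{not} in $\overline E$, since otherwise $u,v_2$ would be joined by the shorter path $u,v_2,u_1,\dots$ contradicting minimality (here I use that $v_2\neq v_0$ when $k\geq 3$; the case $k=2$, i.e.\ $v_2=v_0$, is handled directly since $\{u,v_0\}\notin\overline E$ by assumption). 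Hence $\overline{G}$ has an induced subgraph on $\{u,u_1,v_1,v_2\}$ with exactly three cross-edges, which by the equivalence above means $|E[\{u,u_1,v_1,v_2\}]| = 1$ in $G$, contradicting the hypothesis.

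The main obstacle — really the only subtlety — is getting the book-keeping of the shortest-path argument right so that the four chosen vertices are genuinely distinct and the non-edge $\{u,v_2\}$ (or $\{u,v_0\}$ in the base case) is correctly justified from minimality; one has to be slightly careful that $v_1, v_2$ are distinct vertices of $V$ and $u, u_1$ distinct vertices of $U$, which follows from the path being shortest in a bipartite graph. Once the forbidden-configuration reformulation is in place, the rest is routine. I would present it by first stating the equivalence (``exactly one edge among four vertices in $G$'' $\iff$ ``exactly three edges among them in $\overline G$''), then the shortest-path contradiction, then invoke symmetry between $U$ and $V$ to conclude that $H$ is the complete bipartite graph on $U_H\sqcup V_H$.
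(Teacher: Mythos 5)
Your argument is correct and rests on the same core mechanism as the paper's proof: translate the hypothesis into the bipartite complement (exactly one $G$-edge on four vertices $u,u'\in U$, $v,v'\in V$ is the same as exactly three $\overline{G}$-edges), and derive a contradiction from a shortest path in $\overline{G}$ between a non-adjacent pair $u\in U$, $v_0\in V$ lying in the same component. The execution is genuinely a bit leaner than the paper's, though: you fix an \emph{arbitrary} such pair and use only the no-shortcut property of a shortest path between these two fixed vertices, extracting the forbidden configuration from its first four vertices $u,v_1,u_1,v_2$ (with $\{u,v_2\}\notin\overline{E}$ either because $v_2=v_0$, or because such a chord would shorten the path). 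The paper instead chooses a non-adjacent pair at \emph{globally} minimum distance over all bad pairs and works at both ends of the path, needing the intermediate steps $\{u_0,u_{k-2}\}\in\overline{E}$ and hence $\{u_1,u_{k-1}\}\in\overline{E}$ before contradicting on $\{u_0,u_1,u_{k-1},u_k\}$; your version avoids both the global minimization and the separate $k\geq 5$ analysis. Two small slips should be fixed, neither of which affects validity since your dichotomy ``$v_2=v_0$ versus $v_2\neq v_0$'' is exhaustive and each case is correctly justified: because $u$ and $v_0$ lie on opposite sides, the shortest path has \emph{odd} length at least $3$ (not even length $2k\geq 2$), so the base case is path length $3$, i.e.\ $v_2=v_0$; and the shortcut witnessing $\{u,v_2\}\notin\overline{E}$ continues \emph{forward} from $v_2$ along the remainder of the path to $v_0$, not backwards through $u_1$ as written. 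Also, your first informal restatement of the hypothesis (before passing to the complement) omits the condition $\{u',v'\}\notin E$, but it is never used; and the final appeal to ``symmetry'' is unnecessary, since the claim that every $u\in U\cap V(\overline{G}_v)$ is $\overline{G}$-adjacent to every vertex of $V\cap V(\overline{G}_v)$ already yields the full statement.
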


\begin{proof}

For contradiction, assume that there are $u\in U$ and $v\in V$ such that
  $\{u,v\}\not\in\overline{E}$ but $u$ and $v$ belong to the same connected component
  of $\overline{G}$. Choose $u$ and $v$ with the shortest possible distance
  between them. Let $u=u_0,u_1,\ldots,u_k=v$ be a shortest path between
  $u$ and $v$ in $\overline{G}$, where $k\geq 3$ is odd. We will show that
  $|\overline{E}[\{u_0,u_1,u_{k-1},u_k\}]|=3$, which contradicts the assumption
  that $|E[\{u_0,u_1,u_{k-1},u_k\}]|\neq 1$.
 
  If $k=3$ then the claim holds since we assumed that
  $\{u_0,u_1\},\{u_1,u_2\},\{u_2,u_3\}\in\overline{E}$ and $\{u_0,u_3\}\not\in\overline{E}$.

  \begin{figure}[ht]
  \begin{center}
  \begin{tikzpicture}
    \draw node [circle,draw,thick] (u0) at (0,3) {};
    \node[left of=u0] {$u_0$};
    \draw node [circle,draw,thick] (u1) at (3,3) {};
    \node[right of=u1] {$u_1$};
    \draw node [circle,draw,thick] (u2) at (0,1.5) {};
    \node[left of=u2] {$u_2$};
    \draw node [circle,draw,thick] (u3) at (3,1.5) {};
    \node[right of=u3] {$u_3$};
    \draw node [circle,draw,thick] (u4) at (0,0) {};
    \node[left of=u4] {$u_4$};
    \draw node [circle,draw,thick] (u5) at (3,0) {};
    \node[right of=u5] {$u_5$};
    \draw [thick] (u0) -- (u1);
    \draw [thick] (u1) -- (u2);
    \draw [thick] (u2) -- (u3);
    \draw [thick] (u3) -- (u4);
    \draw [thick] (u4) -- (u5);
  \end{tikzpicture} 
  \end{center}
  \caption{Illustration of the proof of Proposition~\ref{prop:bipartite} for $k=5$.}\label{fig:k}
  \end{figure}
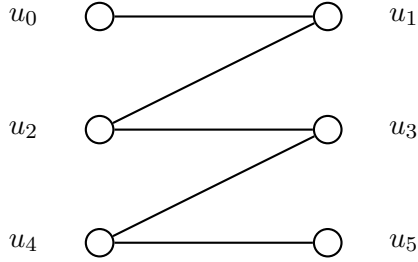
 
  Let $k\geq 5$.
  We will be done if we show that
  $\{u_1,u_{k-1}\}\in\overline{E}$, as by our assumptions
  $\{u_0,u_1\},\{u_{k-1},u_k\}\in\overline{E}$ and
  $\{u_0,u_k\}\not\in\overline{E}$. To this end, note that
  $\{u_0,u_{k-2}\}\in\overline{E}$ as otherwise $u_0$ and $u_{k-2}$ would
  be a pair of vertices with the required properties but of distance $k-2$,
  contradicting our choice of $u$ and $v$. Thus, $\{u_1,u_{k-1}\}\in\overline{E}$
  as otherwise we would have $|\overline{E}[\{u_0,u_1,u_{k-2},u_{k-1}\}]|=3$,
  which contradicts $|E[\{u_0,u_1,u_{k-2},u_{k-1}\}]| \neq 1$.
  (See Figure~\ref{fig:k} for an illustration of the case $k=5$.)
\end{proof}

Case~(2) of Theorem~\ref{thm:main} is established by the following result.

\begin{theorem} \label{thm:spars}
  Let $P:D^{2} \to \{0,1\}$ be a binary predicate such that for any $B,C \in
  \binom{D}{2}$ we have that $\Crestrict{P}{B\times C}$ is not a singleton.
  Then, for every $0<\eps<1$ and every instance $I$ of $\CSPP$ there is a
  sparsifier of $I$ with $O(\eps^{-2}n)$ constraints.
\end{theorem}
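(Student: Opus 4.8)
The plan is to use Proposition~\ref{cutCoverBipartite} to reduce sparsification of $\CSPP$ instances to cut sparsification on the bipartite double cover, following the Boolean strategy of~\cite{Fil} but with the combinatorial input supplied by Proposition~\ref{prop:bipartite}. The first step is to build, for the predicate $P:D^2\to\{0,1\}$, the bipartite ``support graph'' $G^P = (D^{(0)}\sqcup D^{(1)}, E^P)$ on two disjoint copies of $D$, where $\{a^{(0)},b^{(1)}\}\in E^P$ iff $P(a,b)=1$. The hypothesis that no restriction $\Crestrict{P}{B\times C}$ is a singleton says exactly that $G^P$ contains no induced subgraph on two left and two right vertices with exactly one edge; hence Proposition~\ref{prop:bipartite} applies to the \emph{bipartite complement} $\overline{G^P}$, and its nontrivial connected components are complete bipartite graphs. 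Equivalently, $G^P$ itself, on each such component, is the bipartite complement of a biclique, which is a disjoint union of two bicliques; more usefully, the connected components of $\overline{G^P}$ partition $D^{(0)}$ into blocks $L_1,\dots,L_p$ and $D^{(1)}$ into blocks $R_1,\dots,R_q$ such that whether $a^{(0)}b^{(1)}\in E^P$ depends only on the pair of blocks containing $a^{(0)}$ and $b^{(1)}$, and within the ``block pattern'' every row/column is either all-ones or all-zeros outside at most one structured exception — this block structure is what lets us read the support as a cut pattern.

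The key step is then to define the partition map $f_P : \mathrm{Part}_r(V)\to \mathrm{Part}_{r'}(V^\gamma)$ required by Proposition~\ref{cutCoverBipartite} with $P' = \ell\textnormal{-}\Cut$ for a suitable $\ell\le 2|D|$. Given an assignment $A:V\to D$, vertex $u^{(0)}\in V^\gamma$ gets the label of the $\overline{G^P}$-component (among the $L_i$) containing $A(u)^{(0)}$, and $v^{(1)}\in V^\gamma$ gets the label of the component (among the $R_j$) containing $A(v)^{(1)}$; isolated vertices of $\overline{G^P}$ each get their own fresh label. One then checks the identity $\Val_{G,P}(\mathcal A) = \Val_{\gamma(G),\,\ell\textnormal{-}\Cut}(f_P(\mathcal A))$ edge by edge: an edge $(u,v)$ of $G$ contributes $w(u,v)P(A(u),A(v))$ on the left and $w(u,v)\cdot\ell\textnormal{-}\Cut$ of the two new labels of $u^{(0)},v^{(1)}$ on the right, so the identity holds provided the labels assigned to $a^{(0)}$ and $b^{(1)}$ coincide iff $P(a,b)=0$. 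Verifying this is precisely where the structure from Proposition~\ref{prop:bipartite} is needed: since $a^{(0)}$ and $b^{(1)}$ can receive the same label only when they lie in a common connected component of $\overline{G^P}$, and that component is a complete bipartite graph, having the same label forces $\{a^{(0)},b^{(1)}\}\in\overline{E^P}$, i.e. $P(a,b)=0$; conversely if $P(a,b)=0$ then $a^{(0)},b^{(1)}$ lie in the same nontrivial component of $\overline{G^P}$, so they get the same label. (The edge case where one of them is $\overline{G^P}$-isolated cannot occur when $P(a,b)=0$.)

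Once $f_P$ is in hand, the proof finishes immediately: apply the cut-sparsification theorem of Batson--Spielman--Srivastava~\cite{Bats} — valid for $\ell\textnormal{-}\Cut$ as well, since partitioning into $\ell$ parts and counting cut edges is a weighted cut in the ordinary sense — to $\gamma(G)$, whose vertex set has size $2n$, to obtain an $\eps$-$\ell\textnormal{-}\Cut$-sparsifier of $\gamma(G)$ with $g(n)=O(\eps^{-2}\cdot 2n)=O(\eps^{-2}n)$ edges; Proposition~\ref{cutCoverBipartite} then yields an $\eps$-$P$-sparsifier of $G$ with the same number of constraints, which by the graph/instance correspondence is an $\eps$-sparsifier of $I$ with $O(\eps^{-2}n)$ constraints. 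The main obstacle is the combinatorial heart, namely making the block/component structure of $\overline{G^P}$ precise enough that the labelling $f_P$ is well-defined and the edge-by-edge value identity holds unconditionally; I expect the routine verification (degenerate components, domains where $B$ and $C$ overlap, the exact value of $\ell$) to be bookkeeping once Proposition~\ref{prop:bipartite} is invoked.
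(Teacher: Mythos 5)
Your proposal is correct and follows essentially the same route as the paper: the same auxiliary support graph $G^P$ on two copies of $D$, Proposition~\ref{prop:bipartite} to get that the components of $\overline{G^P}$ are bicliques, a labelling of $V^\gamma$ by $\overline{G^P}$-components (the paper's ``colouring'' satisfying $(*)$), the value identity $\Val_{G,P}(\mathcal A)=\Val_{\gamma(G),\ell\textnormal{-}\Cut}(f_P(\mathcal A))$, and then Proposition~\ref{cutCoverBipartite} plus Batson--Spielman--Srivastava extended to $\ell$-\Cut{} (the paper cites \cite[Section~6.2]{Fil} for this last extension, which is a cleaner justification than your parenthetical that an $\ell$-partition cut ``is a weighted cut in the ordinary sense''). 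The only stylistic difference is your first-paragraph block-structure digression, which is unnecessary since, as your second paragraph correctly shows, adjacency in $G^P$ depends only on whether the two vertices lie in the same component of $\overline{G^P}$.
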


\begin{proof} 
  Let $I=(V,D,\Pi,w)$ be an instance of $\CSPP$ with $r=|D|$. Without loss of
  generality, we assume that $D=[r]$.
  Let $G=G^I=(V,E,w)$ be the corresponding (weighted directed) graph of $I$, and 
  let $\gamma(G) = (V^{\gamma},E^{\gamma},w^{\gamma})$ be the bipartite double
  cover of $G$. 
  Recall that for an assignment $A:V \to [r]$, we denote $A_{i}=A^{-1}(i)$. Thus,
  $\mathcal{A}=(A_{0}, \ldots, A_{r-1})$ forms an $r$-partition of $V$. 

  Our goal is to show the existence of a function $f_{P}: Part_{r}(V) \to
  Part_{\ell}(V^{\gamma})$ (for some fixed $\ell \leq 2r$) such that 
  \begin{equation}\label{eq:map}
    \forall A:V\to [r], \qquad\Val_{G,P}(\mathcal{A})= \Val_{\gamma(G),\textnormal{$\ell$-\Cut}} (f_{P}(\mathcal{A})).
  \end{equation}

  Assuming the existence of $f_{P}$, we can finish the proof as follows.
  Batson, Spielman, and Srivastava established the existence of a sparsifier of
  size $O(\eps^{-2}n)$ for any instance of $\CSP(2\textnormal{-}\Cut)$~\cite{Bats}.
  By~\cite[Section~6.2]{Fil}, this implies the existence of a sparsifier of size
  $O(\eps^{-2}n)$ for any instance of $\CSP(\ell\textnormal{-}\Cut)$.
  Consequently, by Proposition~\ref{cutCoverBipartite} and~(\ref{eq:map}), there is a
  sparsifier of size $O(\eps^{-2}n)$ for the instance $I^{G,P}=I$.

  In the proof of Theorem~\ref{thm:Boolbinary}\,(2) in~\cite{Fil}, 
  functions $f_P$ are given for any binary Boolean predicate $P$
  with support size $|P^{-1}(1)| \in \{0,2,4\}$. In what
  follows we give a construction of $f_P$ for an \emph{arbitrary} binary
  predicate $P:[r]^{2} \to \{0,1\}$ with $r \geq 2$ from the statement of the
  theorem.

  Although the bipartite double cover is commonly defined as a directed graph, in this proof we will consider the \emph{undirected} bipartite double cover
  $\gamma(G)$ of $G$.\footnote{We had defined the bipartite double cover as a
  directed graph. However, here it is easier to deal with undirected graphs, as
  since $\ell$-$\Cut$ is a symmetric predicate, the direction of the edges makes no
  difference. Furthermore, notice that by the way the bipartite double cover is
  constructed, removing the direction does \emph{not} turn the graph into a
  multigraph.} We also define an auxiliary graph $G^{P} =(V^{P},E^{P})$, where 
  \[V^{P} = \{v_{0}, v'_{0}, \ldots,
  v_{r-1},v'_{r-1}\},\] \[E^{P} = \{\{v_{i},v'_{j}\}: P(i,j)=1\}.\] 

  Let $\ell$ be the number of connected components of $\overline{G^{P}}$, the
  bipartite complement of $G^{P}$. By
  definition, $\ell \leq |V^{P}|=2r$. 
  
  The desired function $f_P$ satisfying~(\ref{eq:map}) corresponds to a map $c:V^{P} \to [\ell]$ on the vertices of $G^{P}$ with the following property:

  \[(*)\quad \forall i,j \in [r] \quad \begin{cases}
\{v_{i},v'_{j}\} \in E^{P} \implies c(v_{i}) \neq c(v'_{j})\\
\{v_{i},v'_{j}\} \notin E^{P} \implies c(v_{i}) = c(v'_{j}).\\
\end{cases}\]

We call such maps \emph{colourings}. Indeed, the colouring $c$ induces, for
  $\mathcal{A}$, an
  assignment $A^{\gamma}:V^{\gamma} \to [\ell]$ of the vertices of $\gamma(G)$ which satisfies 
  \[A^{\gamma}(u) = c(v_{A(u)}) \quad \mbox{and} \quad A^{\gamma}(u') = c(v'_{A(u)})\]
and which, in turn, induces a partition $\{U_{i}\}_{i=0}^{\ell-1}$ of
  $V^{\gamma}$ with $U_{i} = (A^{\gamma})^{-1}(i)$. We define
  $f_{P}(\mathcal{A}) = (U_{0}, \ldots, U_{\ell-1})$. Now for any $u,v \in V$ and for any assignment $A:V \to [r]$, we have
\begin{align*}
P(A(u),A(v))=1 & \iff \{v_{A(u)},v'_{A(v)}\} \in E^{P} \\
& \iff c(v_{A(u)}) \neq c(v'_{A(v)}) \\
& \iff A^{\gamma}(u) \neq A^{\gamma}(v') \\
& \iff \textnormal{$\ell$-\Cut}(A^{\gamma}(u), A^{\gamma}(v')) = 1.
\end{align*}

Moreover, by the definition of the bipartite double cover, we have $w(u,v)=w^{\gamma}(u,v')$ for all $u,v \in V$, implying that 
\begin{align*}
  \Val_{G,P}(\mathcal{A}) &= \Val_{G,P}(A_{0}, \ldots, A_{r-1})  = \sum_{(u,v) \in E}w(u,v)P(A(u),A(v)) \\
& = \sum_{(u,v') \in E^{\gamma}} w^{\gamma}(u,v')
  \textnormal{$\ell$-\Cut}(A^{\gamma}(u), A^{\gamma}(v')) = \Val_{\gamma(G),\textnormal{$\ell$-\Cut}}(A^{\gamma}) \\
  & = \Val_{\gamma(G),\textnormal{$\ell$-\Cut}}(U_{0}, \ldots, U_{\ell-1}) =
  \Val_{\gamma(G),\textnormal{$\ell$-\Cut}}(f_{P}(\mathcal{A}))
\end{align*}
as required.

While a colouring does not exist for an arbitrary bipartite graph, we now argue
that a colouring does exist if the auxiliary graph $G^{P}$ arises from a
predicate $P$ from the statement of the theorem. Since for any $B,C \in
\binom{[r]}{2}$ we have $|\Crestrict{P}{B\times C}^{-1}(1)| \neq 1$, $G^{P}$
satisfies the assumptions of Proposition~\ref{prop:bipartite}. Therefore, the
$\ell$ separate connected components which form its bipartite complement
$\overline{G^{P}}$ are complete bipartite graphs. We can assign one of the
$\ell$ colours to each connected component to get a colouring for the graph
$G^{P}$. We now show that this colouring satisfies $(*)$. 
(See Figure~\ref{fig:col} for an example of $G^{P}$, $\overline{G^{P}}$, and the
colouring with $\ell=3$ satisfying $(*)$ for a particular predicate $P$ on a four-element
domain.)
\bigskip

  \begin{figure}[ht]
  \begin{center}
  \begin{tikzpicture}
    \draw node [circle,draw,thick,fill=blue] (a0) at (0,4.5) {};
    \draw node [circle,draw,thick,fill=green] (b0) at (3,4.5) {};
    \draw node [circle,draw,thick,fill=red] (a1) at (0,3) {};
    \draw node [circle,draw,thick,fill=blue] (b1) at (3,3) {};
    \draw node [circle,draw,thick,fill=red] (a2) at (0,1.5) {};
    \draw node [circle,draw,thick,fill=red] (b2) at (3,1.5) {};
    \draw node [circle,draw,thick,fill=green] (a3) at (0,0) {};
    \draw node [circle,draw,thick,fill=red] (b3) at (3,0) {};
    \draw [thick] (a0) -- (b0);
    \draw [thick] (a0) -- (b2);
    \draw [thick] (a0) -- (b3);
    \draw [thick] (a1) -- (b0);
    \draw [thick] (a1) -- (b1);
    \draw [thick] (a2) -- (b0);
    \draw [thick] (a2) -- (b1);
    \draw [thick] (a3) -- (b1);
    \draw [thick] (a3) -- (b2);
    \draw [thick] (a3) -- (b3);
    \draw node (l) at (1.5,-1) {$G^{P}$};

    \begin{scope}[shift={(+7,0)}] 
    \draw node [circle,draw,thick,fill=blue] (a0) at (0,4.5) {};
    \draw node [circle,draw,thick,fill=green] (b0) at (3,4.5) {};
    \draw node [circle,draw,thick,fill=red] (a1) at (0,3) {};
    \draw node [circle,draw,thick,fill=blue] (b1) at (3,3) {};
    \draw node [circle,draw,thick,fill=red] (a2) at (0,1.5) {};
    \draw node [circle,draw,thick,fill=red] (b2) at (3,1.5) {};
    \draw node [circle,draw,thick,fill=green] (a3) at (0,0) {};
    \draw node [circle,draw,thick,fill=red] (b3) at (3,0) {};
    \draw [thick] (a0) -- (b1);
    \draw [thick] (a1) -- (b2);
    \draw [thick] (a1) -- (b3);
    \draw [thick] (a2) -- (b2);
    \draw [thick] (a2) -- (b3);
    \draw [thick] (a3) -- (b0);
      \draw node (l) at (1.5,-1) {$\overline{G^{P}}$};
    \end{scope}

  \end{tikzpicture} 
  \end{center}
    \caption{An example of $G^{P}$ and $\overline{G^{P}}$ from the proof of
    Theorem~\ref{thm:spars}. The (vertex) colouring indicates the bicliques of 
    $\overline{G^{P}}$.}\label{fig:col}
  \end{figure}
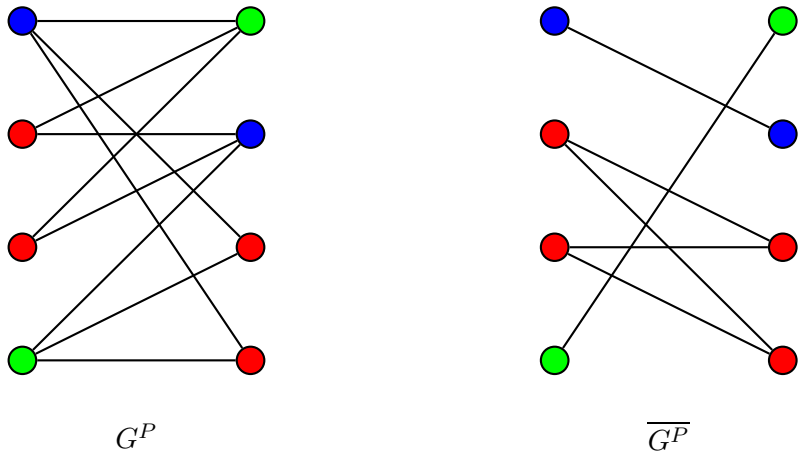

Indeed, if
$\{v_i,v'_j\}\in E^P$ then $\{v_i,v'_j\}$ is not an edge in $\overline{G^{P}}$.
Hence $v_i$ and $v'_j$ are in different connected components of
$\overline{G^{P}}$ and thus $v_i$ and $v'_j$ are assigned different colours.
Similarly, if $\{v_i,v'_j\}\not\in E^P$ then $\{v_i,v'_j\}$ is an edge in
$\overline{G^{P}}$. Hence $v_i$ and $v'_j$ are in the same connected component 
of $\overline{G^{P}}$ and thus are assigned the same colour.
\end{proof}

\section{Conclusion}

For simplicity, we have only presented our main result on binary CSPs over a
single domain. However, it is not difficult to extend our result to the
so-called \emph{multisorted} binary CSPs, in which different variables come with
possibly different domains.  We discuss this in the appendix.

We have classified \emph{binary} CSPs (on finite domains) but much more work
seems required for a full classification of non-binary CSPs. We have made some
initial steps. 

For any $k\geq 3$, the $k$-ary Boolean ``not-all-equal'' predicate
$k\textnormal{-}\NAE:\{0,1\}^{k} \to \{0,1\}$ is defined by
$k\textnormal{-}\NAE^{-1}(0)=\{(0, \ldots,0),(1, \ldots,1)\}$. Kogan and
Krauthgamer showed that the $k\textnormal{-}\NAE$ predicates, which correspond
to hypergraph cuts, are sparsifiable~\cite[Theorem~3.1]{Kog}. By extending
bipartite double covers for graphs in a natural way to $k$-partite $k$-fold
covers (in Section~\ref{sec:hypergraph}) we obtain sparsifiability for the class
of $k$-ary predicates that can be rewritten in terms of $k\textnormal{-}\NAE$.
On the other hand, we identify (in Section~\ref{sec:sufficient}) a whole class
of predicates that are not sparsifiable, namely those $k$-ary predicates that
contain a singleton $\ell$-cube for some $\ell\leq k$. However, most predicates
do not fall in either of these two categories; that is, predicates that cannot
be proved sparsifiable via $k$-partite $k$-fold covers but also cannot be proved
non-sparsifiable via the current techniques. An example of such predicates are
the ``parity'' predicates (cf. Section~\ref{sec:parity} of the appendix).

\section*{Acknowledgements} 

The authors thank all reviewers of the extended abstract~\cite{bz19:stacs} and
this full version of the paper for useful comments.


\newcommand{\noopsort}[1]{}\newcommand{\Zivny}{\noopsort{ZZ}\v{Z}ivn\'y}

\appendix

\section{Extensions}

\subsection{Constraint Satisfaction Problems} \label{sec:csp}

There are several natural and well-studied extensions of the binary CSP
framework: (i) \emph{non-binary} CSPs, in which constraints are of arity larger
than two; (ii) \emph{multisorted} CSPs, in which different variables have
possibly different domains; and (iii) CSPs with different types of constraint
predicates, leading to \emph{constraint languages}. 

\begin{definition}
An instance of the \emph{constraint satisfaction problem} (CSP) is a quadruple
$I = (V,\mathcal{D},\Pi,w)$ where $V=\{v_1,\ldots,v_n\}$ is a set of
variables, $\mathcal{D}=\{D(v_1), \ldots, D(v_n)\}$ is a set of domains, one
for each variable, $\Pi$ is a set of constraints, and $w:\Pi \to \RR$ are
positive weights for the constraints. Each constraint $\pi \in \Pi $ is a pair
$\tuple{\mathbf{v},P}$, where $\textbf{v}=\tuple{v_{i_1},\ldots,v_{i_k}}\in
V^k$ is an ordered $k$-tuple of distinct variables from $V$ and
$P:D(v_{i_1})\times\ldots\times D(v_{i_k})\to\{0,1\}$ is a $k$-ary predicate
on the Cartesian product of the corresponding domains.
\end{definition}

The elements from $\cup_{D\in\mathcal{D}} D$ are called \emph{labels}.

For a fixed predicate $P$, we denote by $\CSPP$ the class of CSP
instances in which all constraints use the predicate $P$. 

We say that an assignment $A: V \to \cup_{D \in \mathcal{D}} D$ is \emph{valid}
if each variable $v \in V$ is assigned a label that belongs to the
intersection of the domains of all the constraint predicates whose scope
contains $v$. For a vector $\textbf{v} \in V^{k}$ and an assignment $A:V \to
\cup_{D \in \mathcal{D}} D$, we denote by $A(\textbf{v})$ the entry-wise
application of $A$ to $\textbf{v}$. Given a predicate $P:
D(v_1)\times\ldots\times D(v_k) \to \{0,1\}$, we say that a constraint $\pi =
\tuple{\textbf{v}, P}$ is \emph{satisfied} by an assignment $A$ if
$P(A(\textbf{v})) = 1$. 

The \emph{value} of an instance $\I=(V,\mathcal{D},\Pi,w)$ under
assignment $A: V \to \cup_{D \in \mathcal{D}} D$ is given by the total weight
of the constraints satisfied by $A$:

\begin{equation}
  \Val_{\I}(A) = \sum_{\pi=\tuple{\textbf{v},P} \in \Pi} w(\pi) P(A(\textbf{v})).
\end{equation}

For $0<\eps<1$, an $\eps$-\emph{sparsifier} of $\I=(V,\mathcal{D},\Pi,w)$
  is a re-weighted subinstance $\I_{\eps} = (V, \mathcal{D}, \Pi_{\eps}
  \subseteq \Pi, w_{\eps})$ of $I$ such that
for all valid assignments $A$ of the variables in $V$, 
\begin{equation} \label{value2}
\Val_{\I_{\eps}}(A) \in (1 \pm \eps)\Val_{\I}(A).
\end{equation}

Given an instance $\I = (V,\mathcal{D},\Pi,w) \in \CSPP$ for a $k$-ary $P$, we
will call the \emph{corresponding hypergraph} of $\I$ the weighted directed
$k$-uniform hypergraph $H^{\I} = (V,E,w)$, where $E= \{\textbf{v}:
\tuple{\textbf{v},P} \in \Pi\}$ and $w(\textbf{v}) = w(\textbf{v},P)$. 
Conversely, given a weighted directed $k$-uniform hypergraph $H=(V,E,w)$ and a predicate $P:D^{k} \to
\{0,1\}$, the \emph{corresponding $\CSPP$ instance} is $\I^{H,P} =
(V,\mathcal{D},\Pi,w)$, where $\mathcal{D}=\{D\}$, $\Pi=\{\tuple{e,P} : e \in
E\}$, and $w(e,P) = w(e)$.
Hence, we can equivalently talk about instances of \CSP($P$) or hypergraphs. 
Thus, an $\eps$-$P$-\emph{sparsifier} of a hypergraph $H=(V,E,w)$ is a partial
subhypergraph\footnote{A partial subhypergraph is obtained by removing
hyperedges while keeping the vertex set unchanged.} $H_{\eps} = (V, E_{\eps} \subseteq E, w_{\eps})$ whose corresponding $\CSPP$ instance $\I^{H_{\eps},P}$ is an $\eps$-sparsifier of the corresponding $\CSPP$ instance $\I^{H,P}$ of $H$.

\subsection{Multisorted Binary Predicates} \label{sec:domains}

The following result is a multisorted extension of Theorem~\ref{thm:main}.

\begin{theorem}\label{thm:maingen}
  Let $P:D\times E\to\{0,1\}$ be a binary predicate, where $D$ and $E$ are
  finite sets with $|D|,|E|\geq 2$. Let $0<\eps<1$.
  \begin{enumerate}
    \item If there exist $B\in\binom{D}{2}$ and $C\in\binom{E}{2}$ such that
        $\Crestrict{P}{B\times C}$ is a singleton then there exists an instance
        $I$ of $\CSPP$ such that every $\eps$-sparsifier of $I$ has $\Omega(n^2)$ constraints.
      \item Otherwise, for every instance $I$ of $\CSPP$ there exists an
        $\eps$-sparsifier of $I$ with $O(\eps^{-2}n)$ constraints.
  \end{enumerate}
\end{theorem}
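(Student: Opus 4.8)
The plan is to reduce Theorem~\ref{thm:maingen} to the single-domain case handled by Theorem~\ref{thm:main}, rather than redo the whole argument. The hardness direction (part~1) is essentially verbatim: given $B=\{b,b'\}\in\binom{D}{2}$ and $C=\{c,c'\}\in\binom{E}{2}$ with $\Crestrict{P}{B\times C}$ a singleton, supported say on $(b,c)$, build the same instance as in the proof of Theorem~\ref{thm:not}: variables $V=X\sqcup Y$ with $|X|=|Y|=n$, one constraint $\pi_{ij}=\tuple{(x_i,y_j),P}$ for each pair, and the assignments $A_{ij}$ sending $x_i\mapsto b$, all other $x\mapsto b'$, $y_j\mapsto c$, all other $y\mapsto c'$. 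The only thing to check is that this is now a \emph{multisorted} instance: the $x$-variables take values in $D$ and the $y$-variables in $E$, and every assignment $A_{ij}$ is valid in the sense of Section~\ref{sec:csp}. Exactly as before, $\Val_I(A_{ij})=w(\pi_{ij})>0$, so every $\eps$-sparsifier must retain $\pi_{ij}$; hence it has $n^2=\Omega(n^2)$ constraints.

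For the sparsifiability direction (part~2), I would run the proof of Theorem~\ref{thm:spars} with the asymmetry built in. Assume $\Crestrict{P}{B\times C}$ is never a singleton for $B\in\binom{D}{2}$, $C\in\binom{E}{2}$. Write $D=[r]$, $E=[s]$. Given an instance $I$ with corresponding directed graph $G=(V,E_G,w)$ — note that tails of edges are $D$-variables and heads are $E$-variables, which is consistent — form its (undirected) bipartite double cover $\gamma(G)$. Define the auxiliary bipartite graph $G^P=(V^P,E^P)$ with left vertices $v_0,\dots,v_{r-1}$, right vertices $v'_0,\dots,v'_{s-1}$, and $\{v_i,v'_j\}\in E^P$ iff $P(i,j)=1$. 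The hypothesis says that for any two left vertices $v_i,v_{i'}$ and two right vertices $v'_j,v'_{j'}$, the induced subgraph has $\neq 1$ edges, so $G^P$ satisfies the hypothesis of Proposition~\ref{prop:bipartite} (which already allows $|U|\neq|V|$... actually it is stated for $|U|=|V|=r$, so here one needs the trivially more general version, or pads both sides — I would just note Proposition~\ref{prop:bipartite} holds with $|U|=r$, $|V|=s$ since its proof never uses $|U|=|V|$). Hence the bipartite complement $\overline{G^P}$ is a disjoint union of $\ell\leq r+s\leq 2\max(|D|,|E|)$ complete bipartite graphs, and colouring each component with its own colour yields a map $c:V^P\to[\ell]$ satisfying property~$(*)$.

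The colouring $c$ induces, for each $r$-partition $\mathcal A$ of $V$ coming from a valid assignment $A$ (so $A(u)\in D$ when $u$ is a $D$-variable, $A(u)\in E$ when $u$ is an $E$-variable), an assignment $A^\gamma$ on $V^\gamma$ by $A^\gamma(u)=c(v_{A(u)})$ for a $D$-variable $u$ and $A^\gamma(u')=c(v'_{A(u)})$ for an $E$-variable $u$; one has to be slightly careful that in the double cover each original variable $v$ splits into $v^{(0)}$ (used as a tail) and $v^{(1)}$ (used as a head), so the rule above is applied componentwise to each copy according to which side of $E^P$ it indexes. Exactly the chain of equivalences in the proof of Theorem~\ref{thm:spars} gives $P(A(u),A(v))=1\iff \ell\text{-}\Cut(A^\gamma(u^{(0)}),A^\gamma(v^{(1)}))=1$, and since $w(u,v)=w^\gamma(u^{(0)},v^{(1)})$ we get $\Val_{G,P}(\mathcal A)=\Val_{\gamma(G),\,\ell\text{-}\Cut}(f_P(\mathcal A))$. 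Then Proposition~\ref{cutCoverBipartite} together with the $O(\eps^{-2}n)$ sparsifier for $\CSP(\ell\text{-}\Cut)$ (from~\cite{Bats} and \cite[Section~6.2]{Fil}) finishes the proof.

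The main obstacle — and really the only non-routine point — is bookkeeping: making sure the double-cover construction and Proposition~\ref{cutCoverBipartite} still go through when variables carry different domains, i.e., that ``valid assignments'' on the multisorted side correspond bijectively to the relevant partitions and that Proposition~\ref{prop:bipartite} is applied in its (immediate) $|U|\neq|V|$ form. Everything else is a transcription of the single-domain proof, so I would keep the write-up short and simply indicate the changes.
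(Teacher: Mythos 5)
Your proposal is correct and takes essentially the same route as the paper: Case~(1) is the proof of Theorem~\ref{thm:not} read in the multisorted setting, and Case~(2) reruns the proof of Theorem~\ref{thm:spars}, applying Proposition~\ref{prop:bipartite} to a bipartite graph with parts of sizes $|D|$ and $|E|$ (the paper likewise notes its proof never uses equal part sizes). The one point the paper makes explicit and you leave under ``bookkeeping'' is that in the multisorted case $\gamma(G)$ may have isolated vertices (e.g.\ the copy $u^{(1)}$ of a variable $u$ occurring only as a tail, where $c(v'_{A(u)})$ need not be defined); the paper removes them by passing to the induced subgraph $\tau(G)$ on the non-isolated vertices, and equivalently one may colour such copies arbitrarily, since they lie on no edge of $\gamma(G)$ and hence affect no value.
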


An inspection of the proof of Theorem~\ref{thm:not} reveals that the proof
establishes Case~(1) of Theorem~\ref{thm:maingen}.
The proof of Theorem~\ref{thm:maingen}\,(2) is essentially identical to the proof of
Theorem~\ref{thm:spars}. The main difference is that, using the notation
from the proof of Theorem~\ref{thm:spars}, the bipartite double cover
$\gamma(G)=(V^\gamma,E^\gamma,w^\gamma)$ of $G$ may contain vertices of degree
zero. Let $Z=\{v^\gamma: \deg_{\gamma(G)}(v)=0\}$ be all such vertices. Let $\tau(G)=(V^{\tau},E^{\tau},w^{\tau})$
be the subgraph of $\gamma(G)$ induced by $V^\gamma\setminus Z$. Then,
for any valid assignment $A:V^\gamma\to D\cup E$ we have
  \begin{equation}
    \Val_{\tau(G),P}(A')= \Val_{\gamma(G),P}(A),
  \end{equation}
where $A'$ is the restriction of $A$ to $V^\tau$. Working with $\tau(G)$ instead
of $\gamma(G)$, the rest of the proof proceeds identically to the proof of
Theorem~\ref{thm:spars}, except for applying Proposition~\ref{prop:bipartite} 
to bipartite graphs whose left part is $D$ and the right part is $E$.
This last step is fine since the proof of Proposition~\ref{prop:bipartite} is
not affected if the given bipartite graph has parts of different sizes.

\begin{remark}\label{rem:langs}

For a fixed \emph{set} $\Gamma$ of predicates, we denote by $\CSP(\Gamma)$ the
class of CSP instances in which all constraints use predicates from $\Gamma$.
$\Gamma$ is often called a constraint language.  Filtser and Krauthgamer
considered sparsifiability of binary Boolean CSPs of the from $\CSP(\Gamma)$,
i.e., CSPs with multiple binary Boolean predicates~\cite[Section~5]{Fil}.
Under the assumption that no two constraints act on the same list of
variables, any instance $\I$ of $\CSP(\Gamma)$ can be partitioned 
into disjoint CSP instances according to the predicates in the constraints. By
finding a sparsifier for each of these instances, the union of the sparsifiers
  yields a sparsifier for $\I$. Thus our main sparsifiability result
  (Case~(2) of Theorem~\ref{thm:main} and its multisorted generalisation,
  Case~(2) of Theorem~\ref{thm:maingen}) trivially
extends to $\CSP(\Gamma)$ for any $\Gamma$ that consists of predicates that do
not contain singleton subpredicates.

\end{remark}

\subsection{Hypergraph Covers} \label{sec:hypergraph}

We generalise the notion of the bipartite double cover for graphs
from~\cite{Brualdi80:jgt} in a natural way to that of a $k$-partite $k$-fold
cover for hypergraphs, as this will be useful in the proof of
Theorem~\ref{thm:singleton}. The case of $k=2$ in the following definition
corresponds to the bipartite double cover.

\begin{definition}
  
For a weighted directed $k$-uniform hypergraph $H=(V,E,w)$, the
\emph{$k$-partite $k$-fold cover} of $H$ is the weighted directed $k$-uniform hypergraph
$\gamma(H)=(V^{\gamma}, E^{\gamma}, w^{\gamma})$, where
\begin{itemize}
\item $V^{\gamma} = \{v^{(0)},v^{(1)}, \ldots, v^{(k-1)}:v \in V\}$;
\item $E^{\gamma} = \{(v_{1}^{(0)}, \ldots, v_{k}^{(k-1)}): (v_{1}, \ldots, v_{k}) \in E\}$;
\item $w^{\gamma}((v_{1}^{(0)}, \ldots, v_{k}^{(k-1)})) = w(v_{1}, \ldots, v_{k})$.
\end{itemize}
\end{definition}

Given an assignment $A:V \to [r]$, we let $\mathcal{A}=(A_{0}, \ldots, A_{r-1})$
be the induced $r$-partition of $V$, where $A_{j} = A^{-1}(j)$. For a predicate
$P:[r]^{k} \to \{0,1\}$ and an instance $\I=(V,[r],\Pi,w) \in \CSP(P)$, we
define $\Val_{\I}(\mathcal{A}) = \Val_{\I}(A)$. Moreover, for a weighted
directed $k$-uniform hypergraph $H$ and a $k$-ary predicate $P$, we define
$\Val_{H,P}(\mathcal{A})=\Val_{\I^{H,P}}(\mathcal{A})$. We denote the set of all
$r$-partitions of $V$ by $Part_{r}(V)$. 

For any $r$-partition $\mathcal{A}=(A_{0}, \ldots, A_{r-1})$ of the vertices of
$V$, let $A_{i}^{(j)} = \{v^{(j)}: v \in A_{i}\}$. Thus $\mathcal{A}^{\gamma} =
(A^{(0)}_{0},\ldots,A^{(k-1)}_{0}, \ldots, A^{(0)}_{r-1}, \ldots,
A^{(k-1)}_{r-1})$ is a $kr$-partition of the vertices of $V^{\gamma}$.

We use an argument from the proof Theorem~\ref{thm:Boolbinary}\,(2)
from~\cite{Fil} and apply it to non-binary, non-Boolean predicates. 

\begin{proposition} \label{cutCover}
Let $P:[r]^{k} \to \{0,1\}$ and $P':[r']^{k} \to \{0,1\}$ be $k$-ary predicates.
Suppose that there is a function $f_{P}: Part_{r}(V) \to Part_{r'}(V^{\gamma})$ such that for any weighted directed $k$-uniform hypergraph $H$ on $V$ and for any $r$-partition $\mathcal{A} \in Part_{r}(V)$ it holds that 
\[\Val_{H,P}(\mathcal{A})= \Val_{\gamma(H),P'} (f_{P}(\mathcal{A})),\]
where $\gamma(H)=(V^{\gamma},E^{\gamma},w^{\gamma})$ is the $k$-partite $k$-fold cover of $H$. 
  If there is an $\eps$-$P'$-sparsifier of $\gamma(H)$ of size $g(n)$ then there is
  an $\eps$-$P$-sparsifier of $H$ size $g(n)$.
\end{proposition}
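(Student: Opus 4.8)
The plan is to mimic the proof of Proposition~\ref{cutCoverBipartite} almost verbatim, replacing the bipartite double cover $\gamma(G)$ by the $k$-partite $k$-fold cover $\gamma(H)$ and graphs by $k$-uniform hypergraphs throughout. The only structural facts we need are that the $k$-partite $k$-fold cover commutes with taking partial subhypergraphs, and that edge weights are preserved under the cover; both are immediate from the definition.

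First I would start from the hypothesised $\eps$-$P'$-sparsifier $\gamma(H)_\eps=(V^\gamma,E^\gamma_\eps,w^\gamma_\eps)$ of the cover $\gamma(H)$, of size $g(n)$. I would then ``pull it back'' to a partial subhypergraph $H_\eps=(V,E_\eps,w_\eps)$ of $H$ by declaring $(v_1,\ldots,v_k)\in E_\eps$ exactly when $(v_1^{(0)},\ldots,v_k^{(k-1)})\in E^\gamma_\eps$, and setting $w_\eps(v_1,\ldots,v_k)=w^\gamma_\eps(v_1^{(0)},\ldots,v_k^{(k-1)})$. The key observation — which should be checked directly from the definition of the $k$-partite $k$-fold cover — is that $\gamma(H_\eps)=\gamma(H)_\eps$ and $E_\eps\subseteq E$, so $H_\eps$ is indeed a legitimate partial subhypergraph of $H$ of size $|E_\eps|=|E^\gamma_\eps|=g(n)$.

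Next I would verify the sparsification inequality. For any $r$-partition $\mathcal{A}\in Part_r(V)$, chaining the hypotheses gives
\begin{align*}
  \Val_{H_\eps,P}(\mathcal{A})
  &= \Val_{\gamma(H_\eps),P'}(f_P(\mathcal{A}))
   = \Val_{\gamma(H)_\eps,P'}(f_P(\mathcal{A})) \\
  &\in (1\pm\eps)\,\Val_{\gamma(H),P'}(f_P(\mathcal{A}))
   = (1\pm\eps)\,\Val_{H,P}(\mathcal{A}),
\end{align*}
where the first and last equalities use the assumed identity relating $\Val_{H,P}$ and $\Val_{\gamma(H),P'}$ (applied to $H_\eps$ and to $H$ respectively), the second uses $\gamma(H_\eps)=\gamma(H)_\eps$, and the containment is the sparsifier guarantee for $\gamma(H)_\eps$. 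Since this holds for every $r$-partition, i.e. for every assignment $A:V\to[r]$, the subhypergraph $H_\eps$ is an $\eps$-$P$-sparsifier of $H$ of size $g(n)$, as claimed.

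There is essentially no obstacle here: the statement is a routine ``functoriality'' argument and the proof is a direct transcription of Proposition~\ref{cutCoverBipartite}. The only point requiring a moment's care is confirming that the pullback $H_\eps$ is well defined and that $\gamma$ commutes with passing to it — in particular that no two distinct hyperedges of $H$ map to the same hyperedge of $\gamma(H)$, so that the correspondence between $E_\eps$ and $E^\gamma_\eps$ is a genuine bijection respecting weights. This holds because in $\gamma(H)$ the $j$-th coordinate of a hyperedge always lies in the copy $V^{(j)}=\{v^{(j)}:v\in V\}$, so the ordered tuple $(v_1^{(0)},\ldots,v_k^{(k-1)})$ determines $(v_1,\ldots,v_k)$ uniquely.
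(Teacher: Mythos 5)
Your proof is correct and follows essentially the same route as the paper: pull the sparsifier of $\gamma(H)$ back to a partial subhypergraph $H_\eps$ with $\gamma(H_\eps)=\gamma(H)_\eps$, then chain the assumed value identity (applied to $H_\eps$ and $H$) with the sparsifier guarantee. The extra remark on the injectivity of the hyperedge correspondence is a fine (if implicit in the paper) sanity check.
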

\begin{proof}

Given $H=(V,E,w)$, let $\gamma(H)_{\eps}=(V,E^{\gamma}_{\eps},w^{\gamma}_{\eps})$ be an
  $\eps$-$P'$-sparsifier of the $k$-partite $k$-fold cover $\gamma(H)$. Define
  a partial subhypergraph $H_{\eps}=(V,E_{\eps},w_{\eps})$ of $H$ 
  by $E_{\eps} = \{(v_{1}, \ldots,v_{k}): (v_{1}^{(0)}, \ldots,v_{k}^{(k-1)})
  \in E^{\gamma}_{\eps}\}$ and $w_{\eps}((v_{1},
  \ldots,v_{k}))=w^{\gamma}_{\eps}(v_{1}^{(0)}, \ldots,v_{k}^{(k-1)})$. Note
  that $\gamma(H_{\eps}) = \gamma(H)_{\eps}$ and $E_{\eps} \subseteq E$.

Then, we have
  \begin{align*}
    \Val_{H_{\eps},P}(\mathcal{A}) & =
    \Val_{\gamma(H_{\eps}),P'}(f_{P}(\mathcal{A})) \\
    & = \Val_{\gamma(H)_{\eps},P'}
  (f_{P}(\mathcal{A}))
  \in (1 \pm \eps) \Val_{\gamma(H),P'} (f_{P}(\mathcal{A}))
  = (1 \pm \eps) \Val_{H,P}(\mathcal{A}).
\end{align*}
Hence $H_{\eps}$ is also an $\eps$-$P$-sparsifier of $H$ of size $g(n)$.
\end{proof}

\subsection{Non-Sparsifiability and Singleton Predicates} \label{sec:sufficient}

We identify two simple sufficient conditions for a predicate \emph{not} to be
sparsifiable, namely the \emph{singleton $\ell$-cube}
(Proposition~\ref{prop:singletoncube}) and the \emph{unused label}
(Proposition~\ref{prop:unused}). We then use these conditions to show that
singleton predicates are not sparsifiable.

The idea of a singleton $\ell$-cube is essentially an $\ell$-ary singleton
subpredicate with Boolean domain.

\begin{definition} \label{lcube}

A $k$-ary predicate $P:D^{k} \to \{0,1\}$ \emph{contains a singleton $\ell$-cube} for some
$2 \leq \ell \leq k$ if there exist subdomains
$\{D_{j}=\{d^{j}_{0},d^{j}_{1}\}\}_{j=1}^{\ell} \in \binom{D}{2}$, indices
  $\{n_j\}_{j=1}^{\ell} \in \{0,1\}$, and a permutation $\sigma$ on
  $\{1,2,\ldots,k\}$ 
  such that there
exist $x_{\ell + 1}, \ldots, x_{k} \in D$ which satisfy \[P(\sigma(d^{1}_{n_1}, \ldots,
d^{\ell}_{n_\ell}, x_{\ell +1}, \ldots, x_{k}))=1\] and for all $y_{\ell + 1}, \ldots,
y_{k} \in D$, for all $i_{j} \in \{0,1\}$, \[P(\sigma(d^{1}_{i_{1}}, \ldots,
d^{\ell}_{i_{\ell}}, y_{\ell +1}, \ldots, y_{k}))=1 \quad \implies \quad i_{j}=n_j
\textnormal{ for all }j = 1, \ldots, \ell.\] 

\end{definition}

\begin{proposition}[\textbf{Singleton $\ell$-cube}]\label{prop:singletoncube} 

Let $P:D^{k} \to \{0,1\}$ be a $k$-ary predicate which contains a singleton
$\ell$-cube. Then, there exists a weighted directed $k$-uniform hypergraph
$H=(V,E,w)$ with $|V|=n$ such that for every $0<\eps<1$ and for every partial
subhypergraph $H_{\eps}=(V,E_{\eps},w_{\eps})$ of $H$ which
satisfies~(\ref{value2}), we have $|E_{\eps}| = \Omega(n^{\ell})$.

\end{proposition}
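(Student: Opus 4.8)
The plan is to lift the $\ell=k=2$ construction from the proof of Theorem~\ref{thm:not} to the $\ell$-cube setting. First I would fix the data witnessing that $P$ contains a singleton $\ell$-cube: two-element subdomains $D_j=\{d^j_0,d^j_1\}$ for $1\le j\le\ell$, indices $n_1,\dots,n_\ell\in\{0,1\}$, a permutation $\sigma$ of $\{1,\dots,k\}$, and labels $x_{\ell+1},\dots,x_k\in D$ with $P(\sigma(d^1_{n_1},\dots,d^\ell_{n_\ell},x_{\ell+1},\dots,x_k))=1$, while for all $y_{\ell+1},\dots,y_k\in D$ and all $(i_1,\dots,i_\ell)\in\{0,1\}^\ell$, acceptance $P(\sigma(d^1_{i_1},\dots,d^\ell_{i_\ell},y_{\ell+1},\dots,y_k))=1$ forces $i_j=n_j$ for every $j$.

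Next I would construct $H=(V,E,w)$. Introduce $\ell$ families of fresh variables $v^j_0,\dots,v^j_{m-1}$ ($1\le j\le\ell$) together with $k-\ell$ further variables $z_{\ell+1},\dots,z_k$, all distinct, so that $n:=|V|=\ell m+(k-\ell)$. For each tuple $\mathbf{i}=(i_1,\dots,i_\ell)\in[m]^\ell$ put a hyperedge $e_{\mathbf{i}}$ whose scope is the $\sigma$-reordering of the $k$-tuple $(v^1_{i_1},\dots,v^\ell_{i_\ell},z_{\ell+1},\dots,z_k)$; since $\sigma$ is a bijection this is a legal scope of $k$ distinct variables, and we give it an arbitrary positive weight. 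Thus $|E|=m^\ell$, and since $2\le\ell\le k$ are constants fixed by $P$ we have $m=\Theta(n)$, hence $|E|=\Theta(n^\ell)$.

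Then, for each target $\mathbf{i}^\star=(i^\star_1,\dots,i^\star_\ell)\in[m]^\ell$ I would define the assignment $A_{\mathbf{i}^\star}$ isolating $e_{\mathbf{i}^\star}$: set $A(v^j_{i^\star_j})=d^j_{n_j}$, $A(v^j_i)=d^j_{1-n_j}$ for $i\neq i^\star_j$, and $A(z_j)=x_j$. Every label used lies in $D$, so $A_{\mathbf{i}^\star}$ is a valid assignment. Evaluated on $e_{\mathbf{i}}$, this assignment feeds $P$ the $\sigma$-image of the tuple whose $j$-th un-permuted coordinate is $d^j_{n_j}$ when $i_j=i^\star_j$ and $d^j_{1-n_j}$ otherwise (note $d^j_{n_j}\neq d^j_{1-n_j}$), with the remaining $k-\ell$ coordinates equal to $x_{\ell+1},\dots,x_k$. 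By the first condition of the singleton $\ell$-cube this is accepted when $\mathbf{i}=\mathbf{i}^\star$, and by the second condition (specialised to $y_j=x_j$) acceptance forces $i_j=i^\star_j$ for all $j$, i.e.\ $\mathbf{i}=\mathbf{i}^\star$. Hence $\Val_{H}(A_{\mathbf{i}^\star})=w(e_{\mathbf{i}^\star})>0$, so any partial subhypergraph $H_\eps$ satisfying~(\ref{value2}) must contain $e_{\mathbf{i}^\star}$, as otherwise $\Val_{H_\eps}(A_{\mathbf{i}^\star})=0\notin(1\pm\eps)\Val_H(A_{\mathbf{i}^\star})$. Since $\mathbf{i}^\star$ was arbitrary, $E_\eps=E$ and $|E_\eps|=m^\ell=\Omega(n^\ell)$.

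The proof is essentially bookkeeping on top of Theorem~\ref{thm:not}; I expect the only delicate points to be keeping the scopes legal and the isolating assignments valid once the $k-\ell$ dummy coordinates and the permutation $\sigma$ enter, and invoking the singleton $\ell$-cube hypothesis in the right direction. The latter is the step I would check most carefully: Definition~\ref{lcube} quantifies over all values of the last $k-\ell$ arguments, and what we actually use is the specialisation pinning them to $x_{\ell+1},\dots,x_k$, so that the only accepting pattern on the cube is $(d^1_{n_1},\dots,d^\ell_{n_\ell})$.
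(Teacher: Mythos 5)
Your proposal is correct, and the core mechanism is the same as the paper's: fix the data of the singleton $\ell$-cube and, for each choice of $\ell$ distinguished vertices, build the isolating assignment that puts $d^j_{n_j}$ on the distinguished vertex of the $j$-th class, $d^j_{1-n_j}$ on the rest of that class, and the fixed labels $x_{\ell+1},\dots,x_k$ on the remaining coordinates, so that exactly the targeted hyperedge(s) are satisfied and must survive in any $H_\eps$ satisfying~(\ref{value2}). The only real difference is the hard instance: the paper takes the complete $k$-partite hypergraph with $k$ parts of size $q$ (hence $q^k$ hyperedges, $n=kq$) and concludes by a counting argument that at least one hyperedge from each of the $q^{\ell}$ disjoint fibers over the first $\ell$ coordinates lies in $E_\eps$, giving $|E_\eps|\geq q^\ell=\Omega(n^\ell)$; you instead collapse the last $k-\ell$ coordinate classes to single dummy vertices $z_{\ell+1},\dots,z_k$, so your instance has exactly $m^\ell=\Theta(n^\ell)$ hyperedges, every one of which is individually forced, in direct analogy with Theorem~\ref{thm:not}. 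Both yield the claimed $\Omega(n^\ell)$ bound; your variant is marginally leaner (no pigeonhole over fibers, and the instance itself admits no sparsification at all), while the paper's keeps the symmetric product structure and handles the permutation $\sigma$ by assuming it is the identity without loss of generality, which your explicit $\sigma$-reordering of scopes achieves equally well. Your cautionary remark about Definition~\ref{lcube} is also handled correctly: specialising the universally quantified $y_{\ell+1},\dots,y_k$ to $x_{\ell+1},\dots,x_k$ is exactly how the paper uses the hypothesis.
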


\begin{proof}
Let $\{D_{j}=\{d^{j}_{0},d^{j}_{1}\}\}_{j=1}^{\ell}$ and $\{n_j\}_{j=1}^{\ell}$
be as in Definition~\ref{lcube}. Without loss of generality, assume that $\sigma$ is
the identity permutation. 

Let $H=(V,E,w)$ be a weighted directed $k$-uniform hypergraph on $n = kq$
vertices with $V= V_{1} \sqcup \ldots \sqcup V_{k}$, $|V_{i}|=q$
for $i=1, \ldots, k$, and $E=\{(u_{1}, \ldots, u_{k}): u_{i}
  \in V_{i}\}$. Notice that $|E| = q^k$. Take an arbitrary hyperedge
  $f=(v_{1}, \ldots, v_{k}) \in E$. By construction, $v_{j} \in V_{j}$ for all
  $j$. Furthermore, pick some $x_{\ell+1}, \ldots, x_{k}$ such that
  $P(d^{1}_{n_1}, \ldots, d^{\ell}_{n_\ell},x_{\ell+1}, \ldots, x_{k})=1$.

Define the assignment
\[A^{f}:V \to D, \quad \quad \begin{cases}
 A^{f}(v_{j})=d^{j}_{n_j} & \quad \textnormal{ for }j \leq \ell,\\
 A^{f}(v) = d^{j}_{1-n_j} \quad \forall v \in V_{j} \setminus \{v_{j}\} & \quad \textnormal{ for }j \leq \ell,\\
 A^{f}(v) = x_{j} \quad  \forall v \in V_{j} & \quad \textnormal{ for }\ell +1 \leq j \leq k.\\
\end{cases}\]

Notice that $P(A^{f}(u_{1}, \ldots, u_{k})) = 1 \iff u_{j}=v_{j}$ for all $j
\leq \ell$. Therefore, at least one of the $q^{k-\ell}$ edges whose first
$\ell$ variables are $v_{1}, \ldots, v_{\ell}$ must belong to $E_{\eps}$
for~(\ref{value2}) to be satisfied. We repeat the same argument for all
$q^{\ell}$ combinations of vertices $(v_{1}, \ldots, v_{\ell}) \in V_{1}
  \times \ldots \times V_{\ell}$. Thus $|E_{\eps}| \geq q^{\ell} =
  \Theta(n^{\ell})$, as $k$ is a constant, and $|E_{\eps}|= \Omega(n^{\ell})$ as required.
\end{proof}

\begin{example}\label{ex:non}
Let $P:\{0,1\}^3 \to \{0,1\}$ be the ternary Boolean predicate defined by
$P^{-1}(1)=\{(0,0,0), (0,0,1)\}$. Note that $P$ is not a singleton. $P$
contains a singleton $2$-cube (e.g., on the first two coordinates) and thus it is not
sparsifiable by Proposition~\ref{prop:singletoncube}.
\end{example}

Our second sufficient condition for not being sparsifiable is the idea of an
unused label. An unused label is an element of the domain which never appears in
the tuples that belong to the predicate's support set.

\begin{proposition}[\textbf{Unused label}] \label{prop:unused}

Let $P:D^{k} \to \{0,1\}$ be a $k$-ary predicate with $P^{-1}(1) \neq \emptyset$.
Suppose that there exists $z \in D$ such that, for all $x_{1}, \ldots, x_{k-1}
\in D$ and for all permutations $\sigma$ on $\{1,2,\ldots,k\}$, $P(\sigma(x_{1}, \ldots,
x_{k-1},z))=0$. Then, for every weighted directed $k$-uniform hypergraph
$H=(V,E,w)$, for every $0<\eps<1$, and for every partial subhypergraph
$H_\eps=(V,E_\eps\subseteq E,w_\eps)$ of $H$ which satisfies~(\ref{value2}),
we have $|E_\eps|=\Omega(|E|)$.

\end{proposition}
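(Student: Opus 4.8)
The plan is to show that essentially \emph{every} hyperedge must be retained in any valid sparsifier, by exhibiting, for each hyperedge, an assignment that is "killed" if that hyperedge is dropped. First I would fix a satisfying tuple $(a_1,\ldots,a_k)\in P^{-1}(1)$, which exists since $P^{-1}(1)\neq\emptyset$; note that by hypothesis none of the $a_i$ equals $z$ (indeed $z$ never appears in any tuple of the support, under any coordinate). Now take an arbitrary weighted directed $k$-uniform hypergraph $H=(V,E,w)$ and a partial subhypergraph $H_\eps=(V,E_\eps\subseteq E,w_\eps)$ satisfying~(\ref{value2}). The key construction: for a fixed hyperedge $f=(v_1,\ldots,v_k)\in E$, define the assignment $A^f:V\to D$ by $A^f(v_i)=a_i$ for $i=1,\ldots,k$ and $A^f(v)=z$ for every $v\in V\setminus\{v_1,\ldots,v_k\}$.

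The next step is to check that under $A^f$, the only hyperedge contributing positive weight is $f$ itself. Take any hyperedge $g=(u_1,\ldots,u_k)\in E$ with $g\neq f$. Since $H$ is a hypergraph (so the entries of an edge are distinct as a set, or at least the edge is a genuine $k$-tuple) and $g\neq f$, there is some coordinate $j$ with $u_j\notin\{v_1,\ldots,v_k\}$; hence $A^f(u_j)=z$. Wait --- I should be slightly careful here: it is conceivable that $g$ uses exactly the same underlying vertex set $\{v_1,\ldots,v_k\}$ but in a different order. In that case $A^f(u_i)\in\{a_1,\ldots,a_k\}$ for all $i$, but the tuple $(A^f(u_1),\ldots,A^f(u_k))$ is a \emph{non-identity} permutation of $(a_1,\ldots,a_k)$. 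To handle this uniformly, I would strengthen the assignment argument: if $g$ has some coordinate pointing outside $\{v_1,\ldots,v_k\}$ then $A^f(g)$ has a $z$ in it, so $P(A^f(g))=0$ by the unused-label hypothesis (the hypothesis quantifies over \emph{all} permutations $\sigma$, so $z$ in any coordinate forces the value $0$). If instead $g$ is a reordering of $f$'s vertices, one can either (i) appeal to the convention that hyperedges are unordered, so this case does not arise, or (ii) choose $H$ in the application to avoid such pairs; but since we want the statement for \emph{every} $H$, the cleanest route is to argue that the set of edges $g$ that are reorderings of a given $f$ has size at most $k!=O(1)$, so discarding those does not affect the $\Omega(|E|)$ bound. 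Thus $\Val_{H,P}(A^f)=w(f)>0$, and for~(\ref{value2}) to hold we need $\Val_{H_\eps,P}(A^f)>0$, which forces some reordering of $f$ (including $f$ itself) to lie in $E_\eps$.

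Finally I would assemble the counting argument. Partition $E$ into equivalence classes under "same underlying vertex $k$-set"; each class has size at most $k!$, a constant. For each class, the argument above shows $E_\eps$ must contain at least one edge from that class (apply the argument to any representative $f$ of the class: the only edges that can be satisfied by $A^f$ are the reorderings of $f$, all in the same class, so at least one of them survives). Hence $|E_\eps|\geq |E|/k! = \Omega(|E|)$, as claimed, since $k$ is a constant. The main obstacle, and the only subtle point, is precisely the "reordered hyperedge" issue: one must be careful about whether hyperedges are ordered or unordered, and either invoke the unordered convention or absorb the $k!$ factor into the constant --- everything else is a direct substitution into the definition of $\Val$ and~(\ref{value2}).
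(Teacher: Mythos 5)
Your proposal is correct and follows essentially the same route as the paper's proof: the same assignment (satisfying tuple on the chosen hyperedge's vertices, the unused label $z$ everywhere else), the same observation that only hyperedges supported on that vertex set can be satisfied, and the same counting that absorbs the at-most-$k!$ reorderings per vertex set into the constant, giving $|E_\eps|\geq |E|/k!=\Omega(|E|)$. The "reordered hyperedge" subtlety you flag is exactly what the paper handles via its bound $M_E\leq M\leq k!$ on satisfied hyperedges, so no gap remains.
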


\begin{proof}
  Let $H=(V,H,w)$, $0<\eps<1$, and $H_\eps=(V,E_\eps\subseteq E, w_\eps)$ be as in the statement.
  We will show that $|E_\eps|=\Omega(|E|)$.
  
  Consider some tuple $(a_{1}, \ldots, a_{k}) \in P^{-1}(1)$. By assumption, $z$ does not appear in any tuple which belongs to $P^{-1}(1)$ and therefore we must have $a_{j} \neq z$ for all $j$. Pick
  a hyperedge $f = (u_{1}, \ldots, u_{k}) \in E$ and let 
  $U=\{u_{1},\ldots,u_{k}\}$. Define the assignment $A:V \to D$ by
  $A(u_{j})=a_{j}$ for $j = 1, \ldots, k$ and by $A(v) = z$ for all $v \in V \setminus
  U$. Notice that the $a_{j}$ may not necessarily be all distinct.

  For $d \in D$, let $\delta_{d}$ be the number of times $d$ appears in $(a_{1},
  \ldots, a_{k})$. Further define \[M= \prod_{d \in D, \delta_{d} \neq 0}
  \delta_{d}!\] 
  There are $M_{E} \leq M$ hyperedges $e$ in $E$ (including $(u_{1}, \ldots, u_{k})$) such that $P(A(e))=1$. Call these $e_{1}, \ldots, e_{M_{E}}$. Then
\[\Val_{H,P}(A) = \sum_{e \in E} w(e) P(A(e)) = \sum_{i=1}^{M_{E}}w(e_{i})>0.\]

Since $H_\eps$ is an $\eps$-$P$-sparsifier of $H$, at least one of $e_{1}, \ldots, e_{M_{E}}$ must be in $E_{\eps}$, since otherwise we would have 
\[\Val_{H_{\eps},P}(A) = \sum_{e \in E_{\eps}} w_{\eps}(e) P(A(e)) = 0 \notin (1
  \pm \eps) \Val_{H,P}(A).\]
Noticing that this argument holds for any hyperedge $f \in E$ and that $M \leq k!$, we have \[|E_{\eps}| \geq \frac{|E|}{M_{E}} \geq \frac{|E|}{M} \geq \frac{|E|}{k!}.\]
  Therefore, we have $|E_{\eps}| \geq |E|/k!$ and thus $|E_{\eps}| = \Omega(|E|)$.
\end{proof}

Notice that, if a $k$-ary predicate $P$ has an unused label, then $P$ contains a singleton $k$-cube. For singleton predicates with a very specific support (consisting of the same label), Proposition~\ref{prop:unused} is directly applicable.

\begin{proposition} \label{singleton:special}

Let $P:D^{k} \to \{0,1\}$ be a $k$-ary singleton predicate with $|D| \geq 2$
such that $P^{-1}(1) = \{(a,a,\ldots,a)\}$ for some $a \in D$. Then, for every
weighted directed $k$-uniform hypergraph $H=(V,E,w)$, for every $0<\eps<1$,
and for every partial subhypergraph $H_\eps=(V,E_\eps\subseteq E,w_\eps)$ of
$H$ which satisfies~(\ref{value2}), we have $|E_\eps|=\Omega(|E|)$.

\end{proposition}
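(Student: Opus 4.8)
The plan is to derive Proposition~\ref{singleton:special} directly from Proposition~\ref{prop:unused}. The key observation is that if $P^{-1}(1)=\{(a,a,\ldots,a)\}$ and $|D|\geq 2$, then every label $z\in D$ with $z\neq a$ is an unused label in the sense of Proposition~\ref{prop:unused}: since the only tuple in the support is the constant tuple $(a,\ldots,a)$, no tuple containing $z$ in any coordinate lies in $P^{-1}(1)$, and this holds regardless of permutations since $P^{-1}(1)$ is a single constant (hence permutation-invariant) tuple. Because $|D|\geq 2$ guarantees such a $z$ exists, $P$ satisfies the hypothesis of Proposition~\ref{prop:unused}.

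First I would fix an arbitrary $z\in D\setminus\{a\}$, which exists by the assumption $|D|\geq 2$. Next I would verify the unused-label condition: for any $x_1,\ldots,x_{k-1}\in D$ and any permutation $\sigma$ on $\{1,\ldots,k\}$, the tuple $\sigma(x_1,\ldots,x_{k-1},z)$ has at least one coordinate equal to $z\neq a$, so it cannot equal $(a,\ldots,a)$, the unique element of $P^{-1}(1)$; hence $P(\sigma(x_1,\ldots,x_{k-1},z))=0$. Also $P^{-1}(1)=\{(a,\ldots,a)\}\neq\emptyset$, so the remaining hypothesis of Proposition~\ref{prop:unused} holds. Then I would simply invoke Proposition~\ref{prop:unused} to conclude that for every weighted directed $k$-uniform hypergraph $H=(V,E,w)$, every $0<\eps<1$, and every partial subhypergraph $H_\eps$ satisfying~(\ref{value2}), we have $|E_\eps|=\Omega(|E|)$.

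There is essentially no obstacle here; the statement is a corollary. The only point requiring a moment's care is making sure the permutation quantifier in Proposition~\ref{prop:unused} is handled, which is immediate precisely because the support is a constant tuple and therefore invariant under all coordinate permutations. The proof is one short paragraph.

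\begin{proof}
Since $|D|\geq 2$, pick any $z\in D$ with $z\neq a$. For any $x_1,\ldots,x_{k-1}\in D$ and any permutation $\sigma$ on $\{1,2,\ldots,k\}$, the tuple $\sigma(x_1,\ldots,x_{k-1},z)$ contains $z$ in some coordinate, so it differs from $(a,a,\ldots,a)$; as $P^{-1}(1)=\{(a,a,\ldots,a)\}$, it follows that $P(\sigma(x_1,\ldots,x_{k-1},z))=0$. Moreover $P^{-1}(1)\neq\emptyset$. Hence $P$ satisfies the hypotheses of Proposition~\ref{prop:unused} (with this choice of $z$), and the conclusion follows immediately.
\end{proof}
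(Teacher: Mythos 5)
Your proposal is correct and follows essentially the same route as the paper: both fix an arbitrary $z\in D\setminus\{a\}$, observe that $z$ is an unused label because any tuple containing $z$ differs from the unique support tuple $(a,\ldots,a)$ (making the permutation quantifier trivial), and then invoke Proposition~\ref{prop:unused}. Nothing is missing.
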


\begin{proof} 
  By assumption, the support set of $P$ is not empty and $|D
  \setminus \{a\}| \geq 1$. Notice that, for any $z \in D \setminus \{a\}$, any
  $x_{1}, \ldots, x_{k-1} \in D$, and any permutation $\sigma$ on
  $\{1,2,\ldots,k\}$, we have 
  $P(\sigma(x_{1}, \ldots, x_{k-1},z)) = 0$.
  Thus, $z$ is an unused label and, by Proposition~\ref{prop:unused}, the claim
  follows.
\end{proof}

Since every singleton $k$-ary predicate $P$ contains a $k$-cube, by
Proposition~\ref{prop:singletoncube}, there exists a weighted directed
$k$-uniform hypergraph $H=(V,E,w)$ with $|V|=n$ such that for every $0<\eps<1$
and for every partial subhypergraph $H_{\eps}=(V,E_{\eps},w_{\eps})$ of $H$
which satisfies~(\ref{value2}), we have $|E_{\eps}| = \Omega(n^k)$. In
particular, the $k$-ary singleton predicate $\textsf{nOR}:D^k\to\{0,1\}$ has
this property, where \textsf{nOr} is defined by $\textsf{nOr}^{-1}(1) = \{(0,0, \ldots, 0)\}$. 
We use the concept of $k$-partite $k$-fold covers from
Appendix~\ref{sec:hypergraph} (and in particular Proposition~\ref{cutCover}) to
show that if any instance of $\CSPP$ has a (small) sparsifier then so does
$\CSP(\textsf{nOR})$, which establishes that singleton predicates cannot be sparsifiable.

\begin{theorem} \label{thm:singleton}
  Let $P:D^{k} \to \{0,1\}$ be a $k$-ary singleton predicate. If there is an
  $\eps$-$P$-sparsifier of size $g(n)$ then there is an
  $\eps$-\textsf{nOr}-sparsifier of size $O(g(n))$.
\end{theorem}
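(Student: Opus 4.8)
The plan is to exhibit a reduction from $\CSPP$ to $\CSP(\textsf{nOr})$ that goes through the $k$-partite $k$-fold cover $\gamma(\cdot)$ of Appendix~\ref{sec:hypergraph} and then invoke Proposition~\ref{cutCover}. By hypothesis $P^{-1}(1)=\{(a_1,\ldots,a_k)\}$ is a single tuple; this is the only structure we have to work with. First I would set up the target predicate and the partition map. Write $r=|D|$ and assume $D=[r]$ with $P^{-1}(1)=\{(a_1,\ldots,a_k)\}$. The $k$-partite $k$-fold cover replaces each variable $v$ by $k$ copies $v^{(0)},\ldots,v^{(k-1)}$ and each hyperedge $(v_1,\ldots,v_k)$ by $(v_1^{(0)},\ldots,v_k^{(k-1)})$, so a hyperedge of $\gamma(H)$ sees the $i$-th copy only through its $i$-th slot. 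The key observation is that on this cover we can ``decouple'' the coordinates: define a colouring $c:[r]\times\{0,\ldots,k-1\}\to\{0,1\}$ (or, better, a map $c_i:[r]\to\{0,1\}$ for each slot $i$) by $c_i(d)=0$ iff $d=a_{i+1}$, i.e. slot $i$ of the cover is coloured $0$ precisely when the original variable takes the unique ``good'' value for that position. Given an $r$-partition $\mathcal{A}$ coming from $A:V\to[r]$, let $A^\gamma:V^\gamma\to\{0,1\}$ be $A^\gamma(v^{(i)})=c_{i}(A(v))$, and let $f_P(\mathcal{A})$ be the induced $2$-partition of $V^\gamma$.

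Next I would verify the value-preservation identity required by Proposition~\ref{cutCover}. For a hyperedge $(v_1,\ldots,v_k)\in E$ and its image $(v_1^{(0)},\ldots,v_k^{(k-1)})\in E^\gamma$ with the same weight, we have
\begin{align*}
P(A(v_1),\ldots,A(v_k))=1 &\iff A(v_i)=a_i \text{ for all } i=1,\ldots,k\\
&\iff c_{i-1}(A(v_i))=0 \text{ for all } i=1,\ldots,k\\
&\iff A^\gamma(v_i^{(i-1)})=0 \text{ for all } i=1,\ldots,k\\
&\iff \textsf{nOr}\bigl(A^\gamma(v_1^{(0)}),\ldots,A^\gamma(v_k^{(k-1)})\bigr)=1,
\end{align*}
using $\textsf{nOr}^{-1}(1)=\{(0,\ldots,0)\}$. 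Summing over all hyperedges with their weights gives $\Val_{H,P}(\mathcal{A})=\Val_{\gamma(H),\textsf{nOr}}(f_P(\mathcal{A}))$ for every $r$-partition $\mathcal{A}$ and every weighted directed $k$-uniform hypergraph $H$ on $V$; this is exactly the hypothesis of Proposition~\ref{cutCover} with $P'=\textsf{nOr}$, $r'=2$.

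Finally I would assemble the conclusion. Given any instance $I$ of $\CSP(\textsf{nOr})$ on $n$ variables, view it as a hypergraph $H=H^I$ and form its cover $\gamma(H)$, which has $kn$ vertices. Wait — the reduction of Proposition~\ref{cutCover} turns a $P'$-sparsifier of $\gamma(H)$ into a $P$-sparsifier of $H$, so to use a $P$-sparsifier to build an $\textsf{nOr}$-sparsifier I should instead argue contrapositively: take an arbitrary instance $H$ of $\CSP(\textsf{nOr})$, and observe that $H$ is itself of the form $\gamma(H_0)$ only for special $H_0$, so this direction is not immediate from Proposition~\ref{cutCover} as stated. The cleaner route, which I would actually pursue, is to note that $\textsf{nOr}$ is itself a $k$-ary singleton predicate, so its own $k$-partite $k$-fold cover reduction is trivial (the colouring above with $a_i=0$ is the identity on $\{0,1\}$), and then to run the reduction in the right direction: from a hypergraph $H$ realising $\CSP(\textsf{nOr})$, build the $\CSPP$ instance on the \emph{same} vertex set whose hyperedges are the same tuples, apply the assumed $\eps$-$P$-sparsifier of size $g(n)$ to get a subhypergraph, and use the value identity (read backwards, with $A$ ranging over assignments into $\{a_1\}\times\cdots\times\{a_k\}$-type patterns) to see the resulting subhypergraph $\eps$-sparsifies the $\textsf{nOr}$ instance. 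The main obstacle is precisely this bookkeeping about the direction of the reduction: Proposition~\ref{cutCover} is phrased as ``sparsifier of the cover $\Rightarrow$ sparsifier of the base'', whereas the theorem wants ``$P$-sparsifier anywhere $\Rightarrow$ $\textsf{nOr}$-sparsifier'', so I must make sure the colouring/assignment correspondence is genuinely a bijection-respecting $\Val$ on the relevant set of assignments (it suffices that for every assignment $A':V^\gamma\to\{0,1\}$ of the $\textsf{nOr}$ instance there is an assignment $A:V\to D$ of the $P$-instance with matching value on every hyperedge, which holds because $c_i$ is surjective onto $\{0,1\}$ whenever $|D|\geq 2$), and conclude $|E_\eps|=O(g(n))$ with the same $\eps$.
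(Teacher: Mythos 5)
Your first construction is correct as far as it goes, but it establishes the identity $\Val_{H,P}(\mathcal{A})=\Val_{\gamma(H),\textsf{nOr}}(f_P(\mathcal{A}))$, which via Proposition~\ref{cutCover} yields ``an $\eps$-$\textsf{nOr}$-sparsifier of $\gamma(H)$ gives an $\eps$-$P$-sparsifier of $H$'' --- the converse of what Theorem~\ref{thm:singleton} asserts. You notice this, but the repair you then propose has a genuine gap: placing the $P$-instance on the \emph{same} vertex set as the $\textsf{nOr}$-instance cannot work in general. The support tuple $(a_1,\ldots,a_k)$ of $P$ need not be constant, while a vertex of $H$ may occur in different coordinate positions of different hyperedges; if, say, $v$ is the first coordinate of one hyperedge and the second coordinate of another and $a_1\neq a_2$, then no single assignment $A'\colon V\to D$ can satisfy $P(A'(e))=\textsf{nOr}(A(e))$ for both hyperedges simultaneously when $A$ vanishes on their vertices. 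Your observation that each $c_i$ is surjective only provides a slot-wise translation of values, and slot-wise translation is available precisely on the cover, where each copy $v^{(i)}$ occurs in exactly one position --- that is the whole point of the $k$-partite $k$-fold cover, and it is exactly what the same-vertex-set shortcut discards. (For constant support tuples $(a,\ldots,a)$ the shortcut does work, but that case is already handled by Proposition~\ref{singleton:special}.)

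The missing idea is to run the cover construction with the roles of the two predicates exchanged, which is what the paper does. Start from an \emph{arbitrary} $\textsf{nOr}$-instance, i.e., an arbitrary weighted directed $k$-uniform hypergraph $H=(V,E,w)$, and view its cover $\gamma(H)$ as a $\CSPP$ instance. Define $f\colon Part_{r}(V)\to Part_{r}(V^{\gamma})$ by the per-slot shift $A^{\gamma}(v^{(i-1)})=(A(v)+a_i)\bmod r$, equivalently $f(A_0,\ldots,A_{r-1})=(A^{P}_{0},\ldots,A^{P}_{r-1})$ with $A^{P}_{j}=\bigcup_{i=1}^{k}A^{(i-1)}_{(j-a_i)\bmod r}$. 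Since adding $a_i$ is a bijection on $[r]$, for every hyperedge $e=(v_1,\ldots,v_k)$ one has $\textsf{nOr}(A(e))=1$ iff $A(v_i)=0$ for all $i$ iff $A^{\gamma}(v_i^{(i-1)})=a_i$ for all $i$ iff $P(A^{\gamma}(\gamma(e)))=1$, hence $\Val_{H,\textsf{nOr}}(\mathcal{A})=\Val_{\gamma(H),P}(f(\mathcal{A}))$. Now Proposition~\ref{cutCover}, used exactly in its stated direction (a sparsifier of the cover yields a sparsifier of the base, here with base predicate $\textsf{nOr}$ and cover predicate $P$), converts the assumed $\eps$-$P$-sparsifier of $\gamma(H)$ --- a hypergraph on $kn$ vertices, hence of size $O(g(n))$ --- into an $\eps$-$\textsf{nOr}$-sparsifier of $H$ of the same size, which is the claim of the theorem.
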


\begin{proof}
Without loss of generality, $D=[r]$ and $P^{-1}(1)=\{(a_{1}, \ldots, a_{k})\}$. 
Let $H=(V,E,w)$ be a weighted directed $k$-uniform hypergraph. 

We will show the existence of a function $f_{P}: Part_{r}(V) \to Part_{r}(V^{\gamma})$ such that
for any $\mathcal{A} \in Part_{r}(V)$ it holds that
\[\Val_{H,\textsf{nOr}}(\mathcal{A}) = \Val_{\gamma(H),P}(f_{P}(\mathcal{A})).\] 
The statement of the theorem then follows by Proposition~\ref{cutCover}.

Let $A^{P}_{j} = \bigcup_{i=1}^{k}A^{(i-1)}_{(j-a_{i})~(\textnormal{mod } r)}$. 
Define \[f_{P}(A_{0},\ldots,A_{r-1}) = (A^{P}_{0}, \ldots, A^{P}_{r-1}).\]
Moreover, define an assignment $A:V \to D$ by $A(v) = j \iff v \in A_{j}$. By definition, \[\Val_{H,\textsf{nOr}}(A) = \Val_{H,\textsf{nOr}}(A_{0}, \ldots, A_{r-1}).\]
Define the assignment $A^{\gamma}: V^{\gamma} \to D$ by $A^{\gamma}(v^{(i)}) = j
\iff v^{(i)} \in A^{P}_{j}$. We have \[\Val_{\gamma(H),P}(A^{\gamma})=
\Val_{\gamma(H),P}(A^{P}_{0}, \ldots,A^{P}_{r-1}) =
  \Val_{\gamma(H),P}(f_{P}(A_{0}, \ldots,A_{r-1})).\]
For a hyperedge $e = (v_{1}, \ldots, v_{k}) \in E$, define $\gamma(e)=e^{\gamma}=(v_{1}^{(0)}, \ldots, v_{k}^{(k-1)})$. We have
\begin{align*}
\textsf{nOr}(A(e))= 1 & \iff A(v_{1}) = \ldots = A(v_{k})=0\\
& \iff (v_{1}, \ldots, v_{k}) \in A_{0} \times \ldots \times A_{0}\\
& \iff \gamma((v_{1}, \ldots, v_{k}))=(v_{1}^{(0)}, \ldots, v_{k}^{(k-1)}) \in A_{0}^{(0)} \times \ldots \times A_{0}^{(k-1)}.
\end{align*}
Now, for $i=1, \ldots, k$,
\begin{alignat*}{3}
A_{0}^{(i-1)} \subseteq A^{P}_{j} & \iff A_{0}^{(i-1)} =
  A_{j-a_{i}~(\textnormal{mod }r)}^{(i-1)} && \quad \textnormal{ for some }j \in [r]\\
& \iff j = a_{i}~(\textnormal{mod } r) \iff j = a_{i} && \quad \textnormal{ (since } 0 \leq a_{i}, j < r).
\end{alignat*}
Therefore, assuming that $\textsf{nOr}(A(e))= 1$,
\begin{align*}
A_{0}^{(i-1)} \subseteq A^{P}_{j} & \implies (v_{1}^{(0)}, \ldots, v_{k}^{(k-1)}) \in A^{P}_{a_{1}} \times \ldots \times A^{P}_{a_{k}} \\
& \iff A^{\gamma}(v_{1}^{(0)}, \ldots, v_{k}^{(k-1)}) = (a_{1}, \ldots, a_{k}) \\
& \iff P(A^{\gamma}(v_{1}^{(0)}, \ldots, v_{k}^{(k-1)})) = P(A^{\gamma}(\gamma(v_{1}, \ldots, v_{k}))) = P(A^{\gamma}(\gamma(e)))= 1.
\end{align*}
Therefore, for any $e \in E$,
\[\textsf{nOr}(A(e)) = 1 \iff P(A^{\gamma}(\gamma(e)))=1\]
which implies
\begin{align*}
\Val_{H,\textsf{nOr}}(A_{0},\ldots, A_{r-1}) & = \Val_{H,\textsf{nOr}}(A) = \sum_{e \in E}w(e)\textsf{nOr}(A(e)) = \sum_{e \in E}w(e)P(A^{\gamma}(\gamma(e))) \\
& = \sum_{e \in E}w^{\gamma}(\gamma(e))P(A^{\gamma}(\gamma(e)))  = \sum_{e^{\gamma} \in E^{\gamma}}w^{\gamma}(e^{\gamma})P(A^{\gamma}(e^{\gamma})) \\
  & = \Val_{\gamma(H),P}(A^{\gamma}) = \Val_{\gamma(H),P}(f_{P}(A_{0},\ldots, A_{r-1})).
\end{align*}
\end{proof}

\subsection{Parity Predicates} \label{sec:parity}

The $k$-ary parity predicate $\Par:[r]^{k} \to \{0,1\}$ is defined by
\[\textsf{Par}(x_{1}, \ldots, x_{k})=1 \quad \iff \quad \sum_{i=1}^{k}x_{i} = 0 \textnormal{ (mod }2).\]

It is trivial to show that the parity predicates do not contain an unused label.
We will show that, for any $k \geq 3$, the $k$-ary parity predicate does
\emph{not} contain a singleton $\ell$-cube for any $\ell \leq k$, yet it
\emph{cannot} be written in terms of a hypergraph cut predicate. 

\begin{proposition}
For all $2 \leq \ell \leq k$, where $k\geq 3$, the $k$-ary parity predicate $\Par$ does not contain a singleton $\ell$-cube. 
\end{proposition}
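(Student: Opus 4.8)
The plan is to argue by contradiction, exploiting the fact that the value of $\Par$ on a tuple depends only on the parities of its entries. First I would use symmetry: since $\Par$ is invariant under permutations of its arguments, we may assume without loss of generality (as in the proof of Proposition~\ref{prop:singletoncube}) that the permutation $\sigma$ in Definition~\ref{lcube} is the identity. So suppose $\Par$ contained a singleton $\ell$-cube, witnessed by two-element subdomains $D_j=\{d^j_0,d^j_1\}\subseteq[r]$ for $j=1,\ldots,\ell$, indices $n_1,\ldots,n_\ell\in\{0,1\}$, and elements $x_{\ell+1},\ldots,x_k\in[r]$ with $\Par(d^1_{n_1},\ldots,d^\ell_{n_\ell},x_{\ell+1},\ldots,x_k)=1$, i.e.\ $\sum_{j=1}^\ell d^j_{n_j}+\sum_{m>\ell}x_m\equiv 0\pmod 2$.

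Next I would split into two cases according to the parity structure of the subdomains. If some subdomain $D_j$ has its two elements of equal parity, i.e.\ $d^j_0\equiv d^j_1\pmod 2$, then replacing $n_j$ by $1-n_j$ (and taking $y_m=x_m$ for $m>\ell$) leaves the above sum unchanged modulo $2$, so the resulting tuple still lies in $\Par^{-1}(1)$; but its index vector differs from $(n_1,\ldots,n_\ell)$ in coordinate $j$, contradicting the uniqueness clause of Definition~\ref{lcube}. Otherwise every $D_j$ contains exactly one even and one odd element, and here I would use the hypothesis $\ell\geq 2$: flipping the indices of the first two coordinates simultaneously (take $i_1=1-n_1$, $i_2=1-n_2$, $i_j=n_j$ for $j\geq 3$, and again $y_m=x_m$) changes the sum by $(d^1_{1-n_1}-d^1_{n_1})+(d^2_{1-n_2}-d^2_{n_2})\equiv 1+1\equiv 0\pmod 2$, so the value of $\Par$ is again preserved, while the new index vector is distinct from $(n_1,\ldots,n_\ell)$, the desired contradiction. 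In both cases the singleton $\ell$-cube cannot exist.

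I do not expect a genuine obstacle here; the only points requiring care are (i) confirming that in each case the modified tuple is an admissible choice of $(i_1,\ldots,i_\ell)$ and $(y_{\ell+1},\ldots,y_k)$ in Definition~\ref{lcube}, and (ii) checking its index vector really differs from $(n_1,\ldots,n_\ell)$, which holds since $1-n\neq n$ for $n\in\{0,1\}$. I would also remark that the hypothesis $k\geq 3$ is not actually used in the argument — $\ell\geq 2$ alone suffices — but it is the natural regime in the surrounding discussion of non-binary predicates.

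\begin{proof}
Since $\Par$ is invariant under permutations of its arguments, we may assume the permutation $\sigma$ of Definition~\ref{lcube} is the identity. Suppose for contradiction that $\Par$ contains a singleton $\ell$-cube, witnessed by subdomains $D_j=\{d^j_0,d^j_1\}\subseteq[r]$ for $j=1,\ldots,\ell$, indices $n_1,\ldots,n_\ell\in\{0,1\}$, and values $x_{\ell+1},\ldots,x_k\in[r]$, so that
\[
\sum_{j=1}^{\ell} d^j_{n_j} \;+\; \sum_{m=\ell+1}^{k} x_m \;\equiv\; 0 \pmod 2 .
\]
Note that for any $(i_1,\ldots,i_\ell)\in\{0,1\}^\ell$ we have $\Par(d^1_{i_1},\ldots,d^\ell_{i_\ell},x_{\ell+1},\ldots,x_k)=1$ if and only if $\sum_{j=1}^\ell d^j_{i_j}+\sum_{m>\ell}x_m\equiv 0\pmod 2$.

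\emph{Case 1: some $D_j$ satisfies $d^j_0\equiv d^j_1\pmod 2$.} Then $d^j_{1-n_j}\equiv d^j_{n_j}\pmod 2$, so the index vector $(n_1,\ldots,n_{j-1},1-n_j,n_{j+1},\ldots,n_\ell)$ also yields value $1$ (with the same $x_m$'s). As it differs from $(n_1,\ldots,n_\ell)$, this contradicts the uniqueness requirement of Definition~\ref{lcube}.

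\emph{Case 2: every $D_j$ has one even and one odd element.} Then $d^j_{1-n_j}-d^j_{n_j}$ is odd for each $j$. Since $\ell\geq 2$, consider the index vector $(1-n_1,1-n_2,n_3,\ldots,n_\ell)$: the sum changes by $(d^1_{1-n_1}-d^1_{n_1})+(d^2_{1-n_2}-d^2_{n_2})\equiv 1+1\equiv 0\pmod 2$, so it again yields value $1$ (with the same $x_m$'s). As $1-n_1\neq n_1$, this vector differs from $(n_1,\ldots,n_\ell)$, contradicting uniqueness.

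In either case we reach a contradiction, so $\Par$ does not contain a singleton $\ell$-cube for any $2\leq \ell\leq k$.
\end{proof}
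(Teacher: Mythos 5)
Your proof is correct, but it follows a different route from the paper's. The paper first reduces to the case $\ell=2$ (asserting that containment of a singleton $\ell$-cube for $\ell\geq 3$ implies containment of a singleton $2$-cube), then splits on the parity of $d^1_0-d^1_1$: if the difference is even it flips $n_1$, and if it is odd it flips $n_1$ \emph{and} toggles the free coordinate $x_3$ by $1$ modulo $2$ --- which is where the hypothesis $k\geq 3$ enters. You instead handle general $\ell$ directly, splitting on whether some subdomain $D_j$ has two elements of equal parity (flip that $n_j$) or all subdomains are of mixed parity (flip $n_1$ and $n_2$ simultaneously, using only $\ell\geq 2$); in both cases the compensation happens inside the cube coordinates, so no free coordinate is needed and, as you observe, $k\geq 3$ plays no role. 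Your version has two advantages: it sidesteps the paper's reduction step, whose justification ``by Definition'' is not entirely immediate (the uniqueness clause of a $2$-cube quantifies over \emph{all} $y_3,\ldots,y_k\in D$, whereas the $\ell$-cube uniqueness only constrains tuples whose coordinates $3,\ldots,\ell$ lie in the chosen two-element subdomains, so the implication needs an argument in general), and it yields the slightly stronger statement that $\Par$ contains no singleton $\ell$-cube even when $k=2$. The paper's approach, in exchange, only ever has to reason about two subdomains at a time. Your WLOG on $\sigma$ is fine since $\Par$ depends only on the sum of its arguments.
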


\begin{proof} 
  By Definition~\ref{lcube}, the containment of a singleton $\ell$-cube for some
  $\ell\geq 3$ implies the containment of a singleton $2$-cube. Thus it suffices
  to show that $\Par$ does not contain any singleton $2$-cube.

  Suppose by contradiction that there exist subdomains $\{D_{j}=\{d^{j}_0,d^{j}_1\}\}_{j
  \in \{1,2\}} \in \binom{[r]}{2}$, indices $n_1,n_2 \in \{0,1\}$, and a
  permutation $\sigma$ on $\{1,2,\ldots,k\}$ such that there exist $x_{3}, \ldots, x_{k} \in [r] $ which satisfy
\[\Par(\sigma(d^{1}_{n_1}, d^{2}_{n_2}, x_{3}, \ldots, x_{k}))=1\]
and for all $y_{3}, \ldots, y_{k} \in [r]$, for all $i_{j} \in \{0,1\}$, 
\begin{equation} \label{singcubex} \Par(\sigma(d^{1}_{i_1}, d^{2}_{i_2}, y_{3}, \ldots, y_{k}))=1 \quad \implies \quad i_j=n_j \textnormal{ for all }j = 1, 2. \end{equation}

  \noindent\textbf{Case 1:} $d^{1}_0-d^{1}_1=0 \textnormal{ (mod $2$)}$. 
  
  Then, \[d^{1}_{n_1} +
  d^{2}_{n_2} + \sum_{j=3}^{k}x_{k} = d^{1}_{1-n_1} + d^{2}_{n_2} + \sum_{j=3}^{k}x_{k} \textnormal{ (mod $2$)}\]
and hence
\[\Par(\sigma(d^{1}_{1-n_1},d^{2}_{n_2},x_{3}, \ldots, x_{k})) =
  \Par(\sigma(d^{1}_{n_1},d^{2}_{n_2},x_{3}, \ldots, x_{k})) = 1,\]
contradicting~(\ref{singcubex}).

  \noindent\textbf{Case 2:} $d^{1}_0-d^{1}_1=1 \textnormal{ (mod $2$)}$. 
  
Then, \[d^{1}_{n_1} + d^{2}_{n_2} + \sum_{j=3}^{k}x_{k} = d^{1}_{1-n_1} + d^{2}_{n_2} + (x_{3}+1) + \sum_{j=4}^{k}x_{k} \textnormal{ (mod $2$)}\]
and hence
\[\Par(\sigma(d^{1}_{1-n_1},d^{2}_{n_2},x_{3}+1\textnormal{ (mod $2$)},x_{4}, \ldots, x_{k})) =
  \Par(\sigma(d^{1}_{n_1},d^{2}_{n_2},x_{3}, \ldots, x_{k})) = 1,\]
again contradicting~(\ref{singcubex}). 
\end{proof}

\begin{proposition}
  Let $\Par:[r]^k\to\{0,1\}$ be the $k$-ary parity predicate, where $k\geq 3$.
  Then, for all weighted directed $k$-uniform hypergraphs $H=(V,E,w)$ with $|V| \geq rk$, for all $r' \geq 2$, and for all functions $f:Part_{r}(V) \to Part_{r'}(V^{\gamma})$, there exists a partition of the vertices $\mathcal{A} \in Part_{r}(V)$ such that
  \[\Val_{H,\Par}(\mathcal{A}) \neq \Val_{\gamma(H), r'\textnormal{-}\NAE}(f(\mathcal{A})),\] where 
  $\gamma(H)$ is the $k$-partite $k$-fold cover of $H$.
\end{proposition}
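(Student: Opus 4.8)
We establish exactly what is needed in order to conclude that $\Par$ is not captured by a $k$-partite $k$-fold cover, i.e., the non-existence of a function $f$ satisfying the hypothesis of Proposition~\ref{cutCover} with $P=\Par$ and $P'=r'\textnormal{-}\NAE$; since that hypothesis quantifies over \emph{all} weighted directed $k$-uniform hypergraphs on $V$, it suffices to exhibit, for each $r'\geq 2$ and each $f$, a single hypergraph $H$ on $V$ and a single partition $\mathcal{A}^*\in Part_r(V)$ with $\Val_{H,\Par}(\mathcal{A}^*)\neq\Val_{\gamma(H),r'\textnormal{-}\NAE}(f(\mathcal{A}^*))$, and in fact one $H$ and one $\mathcal{A}^*$ will do for all $f$ (and all $r'$). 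Take $H$ to be the complete $k$-uniform directed hypergraph on $V$, whose edge set consists of all $k$-tuples of distinct vertices of $V$ (nonempty since $|V|\geq rk>k$), weighted so that its edge weights $w(e_1),\dots,w(e_m)$ are pairwise distinct powers of two. Fix a vertex $z\in V$, let $A^*\colon V\to[r]$ send $z$ to $1$ and every other vertex to $0$, and let $\mathcal{A}^*$ be the induced $r$-partition.

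The first step reduces the single scalar identity to edge-by-edge identities. Since $w^{\gamma}(e_i^{\gamma})=w(e_i)$ and the $w(e_i)$ are distinct powers of two, their subset sums are pairwise distinct, so $\Val_{H,\Par}(\mathcal{A}^*)=\Val_{\gamma(H),r'\textnormal{-}\NAE}(f(\mathcal{A}^*))$ would force
\[\Par(A^*(e))=r'\textnormal{-}\NAE\!\left(c(v_1^{(0)}),c(v_2^{(1)}),\dots,c(v_k^{(k-1)})\right)\]
for every edge $e=(v_1,\dots,v_k)$ of $H$, where $c\colon V^{\gamma}\to[r']$ is the assignment corresponding to the partition $f(\mathcal{A}^*)$. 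As the $v_i$ in an edge are distinct, $\sum_i A^*(v_i)$ is $1$ when $z$ occurs in $e$ and $0$ otherwise, so $\Par(A^*(e))=0$ exactly when $z\in e$; hence, under the assumed identity, the colours $c(v_1^{(0)}),c(v_2^{(1)}),\dots,c(v_k^{(k-1)})$ are all equal for every edge containing $z$ and are not all equal for every edge avoiding $z$.

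The second step derives a contradiction from the ``all equal'' conditions. Fix a position $j\in\{1,\dots,k\}$, place $z$ in position $j$ of an edge, and fill the other $k-1$ positions with arbitrary distinct vertices of $V\setminus\{z\}$ (possible since $|V|\geq k$): the condition forces $c$ to take one common value $\alpha_j:=c(z^{(j-1)})$ on all $k$ coordinates of the corresponding $\gamma$-edge, and letting the vertex in a position $i\neq j$ vary gives $c(u^{(i-1)})=\alpha_j$ for every $u\in V\setminus\{z\}$. For distinct $j_1,j_2$ we may, using $k\geq 3$, pick $i\notin\{j_1,j_2\}$ and deduce $\alpha_{j_1}=\alpha_{j_2}$; hence all $\alpha_j$ equal one value $\gamma$, and consequently $c\equiv\gamma$ on all of $V^{\gamma}$. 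But $|V|>k$, so $H$ has an edge $e=(v_1,\dots,v_k)$ avoiding $z$, and for it $c(v_1^{(0)})=c(v_2^{(1)})=\dots=c(v_k^{(k-1)})=\gamma$, contradicting that these colours are not all equal. So the edge-by-edge identities cannot all hold, whence $\Val_{H,\Par}(\mathcal{A}^*)\neq\Val_{\gamma(H),r'\textnormal{-}\NAE}(f(\mathcal{A}^*))$, as claimed. The part I expect to be delicate is this last step: one has to pin the colour of \emph{every} coordinate of $V^{\gamma}$, the coordinates $z^{(m)}$ included, by a suitably chosen edge, and this is precisely where $k\geq 3$ is indispensable — for $k=2$ the bipartite double cover does yield a valid reduction — while the slack $|V|\geq rk$ only serves to guarantee enough vertices to realise all the edges the argument uses. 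The power-of-two weighting in the first step is the key device, as it converts the single identity a reduction must satisfy into full per-edge control; without it the argument would not get off the ground.
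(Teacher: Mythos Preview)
Your argument is internally sound for the particular hypergraph you construct, but it does not prove the proposition as stated. The proposition is universally quantified over $H$: for \emph{every} weighted directed $k$-uniform hypergraph on at least $rk$ vertices, and for every $r'$ and $f$, some partition witnesses the inequality. You instead fix $H$ to be the complete directed $k$-uniform hypergraph with distinct power-of-two weights, and you say so explicitly in your first paragraph. That establishes $\exists H\,\forall f\,\exists\mathcal{A}$ (indeed even $\exists H\,\exists\mathcal{A}\,\forall f$), which suffices to block the hypothesis of Proposition~\ref{cutCover}, but it is strictly weaker than the $\forall H\,\forall f\,\exists\mathcal{A}$ statement you were asked to prove.

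Both of your key devices collapse once $H$ is arbitrary. The power-of-two trick, which converts the single scalar identity $\Val_{H,\Par}(\mathcal{A}^*)=\Val_{\gamma(H),r'\textnormal{-}\NAE}(f(\mathcal{A}^*))$ into per-edge identities, requires that the edge weights have unique subset sums; a generic $H$ need not satisfy this. And the pinning argument in your second step needs, for each position $j$ and each $u\in V\setminus\{z\}$, an edge with $z$ in position $j$ and $u$ in some other position, as well as an edge avoiding $z$ entirely; a generic $H$ need not contain any of these edges.

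By contrast, the paper's proof works with an arbitrary $H$ from the start. It fixes a balanced partition $\mathcal{A}$ with $|A_i|\geq k$ for all $i$ (this is where the hypothesis $|V|\geq rk$ is used), analyses how the partition $f(\mathcal{A})$ of $V^\gamma$ must interact with the blocks $A_i^{(j)}$, and then compares the parity and $\NAE$ values on carefully chosen $k$-tuples drawn from $A_0$ and $A_1$ to reach a contradiction. The choice of assignment is different from yours (balanced rather than almost constant), and no assumption is made on the weights or the edge set.
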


\begin{proof}
We proceed by contradiction. Suppose that there exist a weighted directed
  $k$-uniform hypergraph $H=(V,E,w)$ with $|V| \geq rk$, an integer $r' \geq 2$,
  and a function $f_{\Par}:Part_{r}(V) \to Part_{r'}(V^{\gamma})$ such that for all partitions $\mathcal{A} \in Part_{r}(V)$ we have
\begin{equation} \label{parityAsCut}
  \Val_{H,\Par}(\mathcal{A}) = \Val_{\gamma(H),r'\textnormal{-}\NAE}(f(\mathcal{A})).
\end{equation}
  Let $A:V\to [r]$ be any assignment with the induced $r$-partition
  $\mathcal{A}=(A_{0}, \ldots, A_{r-1}) \in Part_{r}(V)$ such that $|A_{i}| \geq k$ for all $i \in [r]$. Denote $f_{\Par}(\mathcal{A})=(U_{0}, \ldots, U_{r'-1})$. 
Define an assignment $A_{f_{\Par}(\mathcal{A})}:V^{\gamma} \to [r']$ such that, for all $i \in [r']$ and for all $j \in [k]$,
\[A_{f_{\Par}(\mathcal{A})}(v^{(j)}) = i \iff A^{(j)}_{A(v)} \subseteq U_{i} .\]

First of all, we need to show that the assignment $A_{f_{\Par}(\mathcal{A})}$ is well-defined.
Notice that for all $i \in [r]$, for all $j \in [k]$, for all $u^{(j)},v^{(j)}
  \in A^{(j)}_{i}$ and for all $\ell \in [r']$ we must have $\{u^{(j)},v^{(j)}\}
  \cap U_{\ell} \in \{\emptyset, \{u^{(j)},v^{(j)}\}\}$. For suppose by
  contradiction that there exist $i \in [r]$, $j \in [k]$, and $u^{(j)},v^{(j)} \in A^{(j)}_{i}$ such that $u^{(j)} \in U_{\ell_{u}}$ and $v^{(j)} \in U_{\ell_{v}}$ with $\ell_{u} \neq \ell_{v}$. Assume without loss of generality that $j=0$. Then, for all $v_{2}, \ldots, v_{k} \in V$ and for $A_{i} \in \mathcal{A} \in Part_{r}(V)$, we would have 
\begin{align*}
\Par(A(u,v_{2} \ldots, v_{k})) & = \Par(A(u),A(v_{2}), \ldots, A(v_{k})) = \Par(i,A(v_{2}), \ldots, A(v_{k})) \\
& = \Par(A(v), A(v_{2}), \ldots, A(v_{k})) = \Par(A(v,v_{2}, \ldots, v_{k}))
\end{align*}
and hence
\begin{align*}
 r' \textnormal{-} \NAE(A_{f_{\Par}(\mathcal{A})}(u^{(0)},v_{2}^{(1)}, \ldots, v_{k}^{(k-1)})) & = \Par(A(u,v_{2}, \ldots, v_{k})) = \Par(A(v,v_{2}, \ldots, v_{k})) \\
 & =  r' \textnormal{-} \NAE(A_{f_{\Par}(\mathcal{A})}(v^{(0)},v_{2}^{(1)}, \ldots, v_{k}^{(k-1)})).
\end{align*}
Now pick $v_{2}, \ldots, v_{k}$ such that $\Par(A(u,v_{2}, \ldots, v_{k}))=0$. Then,
\begin{align} 
r' \textnormal{-} \NAE(A_{f_{\Par}(\mathcal{A})}(u^{(0)},v_{2}^{(1)}, \ldots, v_{k}^{(k-1)})) & = \Par(A(u,v_{2}, \ldots, v_{k})) = 0 \nonumber \\ 
& \implies v_{2}^{(1)}, \ldots, v_{k}^{(k-1)} \in U_{\ell_{u}} \label{ulu}
\end{align}
and
\begin{align}
r' \textnormal{-} \NAE(A_{f_{\Par}(\mathcal{A})}(v^{(0)},v_{2}^{(1)}, \ldots, v_{k}^{(k-1)})) & = \Par(A(v,v_{2}, \ldots, v_{k})) \nonumber \\ 
& = \Par(A(u,v_{2}, \ldots, v_{k})) = 0 \nonumber \\ 
& \implies v_{2}^{(1)}, \ldots, v_{k}^{(k-1)} \in U_{\ell_{v}}. \label{ulv}
\end{align}
Putting (\ref{ulu}) and (\ref{ulv}) together we get 
\begin{align*}
v_{2}^{(1)}, \ldots, v_{k}^{(k-1)} \in U_{\ell_{u}} \cap U_{\ell_{v}} & \implies  U_{\ell_{u}} \cap U_{\ell_{v}} \neq \emptyset \\
& \implies \ell_{u}=\ell_{v}
\end{align*}
contradicting our initial assumption that $\ell_{u} \neq \ell_{v}$. So for all $j \in [k]$, for all $u^{(j)},v^{(j)} \in A^{(j)}_{i}$ and for all $\ell \in [r']$ we have $\{u^{(j)},v^{(j)}\} \cap U_{\ell} \in \{\emptyset, \{u^{(j)},v^{(j)}\}\}$ and hence $A_{f_{\Par}(\mathcal{A})}$ is well-defined.

Now we want to consider vertices which belong to sets $A_{i}$ of different parity. Without loss of generality, pick $k$ vertices $u_{1}, \ldots, u_{k} \in A_{0}$ and $3$ vertices $v_{1}, v_{2}, v_{3} \in A_{1}$. Then we have
\begin{align*}
& \Par(A(v_{1}, u_{2}, \ldots, u_{k})) = \Par(1,0,\ldots,0)=0\\ \textnormal{ since } \quad & A(v_{1})+\sum_{j=2}^{k}A(u_{j}) = 1 \textnormal{ (mod }2)\end{align*}
and 
\begin{align*}
& \Par(A(v_{1}, v_{2}, v_{3}, u_{4}, \ldots, u_{k})) = \Par(1,1,1,0,\ldots,0)=0\\
\textnormal{ since } \quad & A(v_{1})+A(v_{2})+A(v_{3})+\sum_{j=4}^{k}A(u_{j}) = 3 = 1 \textnormal{ (mod }2).
\end{align*}
Then, by~(\ref{parityAsCut}) we must have
\[r'\textnormal{-}\NAE(A_{f_{\Par}(\mathcal{A})}(v_{1}^{(0)}, u_{2}^{(1)}, \ldots, u_{k}^{(k-1)}))=0\]
and
\[r'\textnormal{-}\NAE(A_{f_{\Par}(\mathcal{A})}(v_{1}^{(0)},v_{2}^{(1)},v_{3}^{(2)}, u_{4}^{(3)}, \ldots, u_{k}^{(k-1)}))=0\]
respectively.

By the definition of $r'$-$\NAE$, this implies that there exist $x,y \in [r']$ such that, for \[X=A_{1}^{(0)} \sqcup A_{0}^{(1)} \sqcup A_{0}^{(2)} \sqcup \ldots \sqcup A_{0}^{(k-1)} \] and \[Y=A_{1}^{(0)} \sqcup A_{1}^{(1)} \sqcup A_{1}^{(2)} \sqcup A_{0}^{(3)} \sqcup A_{0}^{(4)} \sqcup \ldots \sqcup A_{0}^{(k-1)}\] we have
\[X \cap U_{x} = X \quad \textnormal{ and } \quad Y \cap U_{y} = Y,\]
that is, hyperedges whose vertices lie wholly in $X$ or wholly in $Y$ do not contribute to the cut.
But then,
\[A_{1}^{(0)} \subseteq (X \cap Y) \subseteq  U_{x} \cap  U_{y}\]
which implies $U_{x} \cap  U_{y} \neq \emptyset$ and hence $x=y$.
It follows that 
\[A_{1}^{(0)} \sqcup A_{1}^{(1)} \sqcup A_{0}^{(2)} \sqcup A_{0}^{(3)} \sqcup \ldots \sqcup A_{0}^{(k-1)} \subseteq X \cup Y \subseteq U_{x}\]
and hence 
\[r'\textnormal{-}\NAE(A_{f_{\Par}(\mathcal{A})}(v_{1}^{(0)},v_{2}^{(1)},u_{3}^{(2)},
u_{4}^{(3)}, \ldots, u_{k}^{(k-1)}))=0\] implying, by~(\ref{parityAsCut}), that
\[\Par(A(v_{1}, v_{2}, u_{3}, u_{4}, \ldots, u_{k})) = \Par(1,1,0,0,\ldots,0)=0,\] a contradiction since 
\[A(v_{1})+A(v_{2})+\sum_{j=3}^{k}A(u_{j}) = 2 = 0 \textnormal{ (mod }2).\]
Therefore, such a map $f_{\Par}$ cannot exist.
\end{proof}

\end{document}